\title{The \io complexity of Strassen's matrix multiplication
  with recomputation}
\titlerunning{The \io complexity of Strassen's matrix multiplication
  with recomputation} 
\author[1]{Gianfranco Bilardi}
\author[2]{Lorenzo De Stefani}
\affil[1]{Department of Information Engineering, University of Padova, Via Gradenigo 6B/Padova, Italy\\
  \texttt{bilardi@dei.unipd.it}}
\affil[2]{Department of Computer Science, Brown University, 115 Waterman Street/Providence, United States of America\\
  \texttt{lorenzo@cs.brown.edu}}
\authorrunning{G. Bilardi and L. De Stefani} 
\subjclass{F.2.1 Numerical Algorithms and Problems}
\keywords{\io complexity,  Strassen's Matrix Multiplication, Recomputation, Memory Hierarchy, Parallel Computation}
\newcommand{\ri}{\mathcal{R}}
\newcommand{\BOme}[1]{\Omega\left(#1\right)}
\newcommand{\io }{I/O }
\newcommand{\stral}[1]{\mathcal{H}^{#1}}
\newcommand{\alg}{\mathcal{A}}
\begin{document}

\maketitle

\begin{abstract}
A tight $\Omega((n/\sqrt{M})^{\log_2 7}M)$ lower bound is derived on
the \io complexity of Strassen's algorithm to multiply two $n \times
n$ matrices, in a two-level storage hierarchy with $M$ words of fast
memory.  A proof technique is introduced, which exploits the Grigoriev's
flow of the matrix multiplication function as well as some
combinatorial properties of the Strassen computational directed acyclic graph (CDAG).
Applications to parallel computation are also developed. The result
generalizes a similar bound previously obtained under the constraint
of no-recomputation, that is, that intermediate results cannot be
computed more than once. For this restricted case, another lower bound
technique is presented, which leads to a simpler analysis of the \io
complexity of Strassen's algorithm and can be readily extended to
other ``Strassen-like'' algorithms.
 \end{abstract}

\section{Introduction}
Data movement is increasingly playing a major role in the performance
of computing systems, in terms of both time and energy. This current
technological trend~\cite{patterson2005getting} is destined to
continue, since the very fundamental physical limitations on minimum
device size and maximum message speed lead to inherent costs when
moving data, whether across the levels of a hierarchical memory system
or between processing elements of a parallel
systems~\cite{bilardi1995horizons}. The communication requirements of
algorithms have been the target of considerable research in the last
four decades; however, obtaining significant lower bounds based on such
requirements remains and important and challenging task.

In this paper we focus on the \io complexity of Strassen's matrix
multiplication algorithm.  Matrix multiplication is a pervasive
primitive utilized in many applications.
Strassen~\cite{strassen1969gaussian} showed that two $n \times n$
matrices can be multiplied with $O(n^{\omega})$, with $\omega = \log_2
7 \approx 2.8074$, hence with asymptotically fewer than the $n^3$
arithmetic operations required by the straightforward implementation
of the definition of matrix multiplication. This result has motivated
a number of efforts which have lead to increasingly faster algorithms,
at least asymptotically, with the current record being at $\omega <
2.3728639$~\cite{legall2014}.

\paragraph*{Previous and related work}

\io complexity has been introduced in the seminal work by Hong and
Kung~\cite{jia1981complexity}; it is essentially the number of data
transfers between the two levels of a memory hierarchy with a fast
memory of $M$ words and a slow memory with an unbounded number of
words. They presented techniques to develop lower bounds to the \io
complexity of computations modeled by \emph{computational directed acyclic graphs} (CDAGs). The resulting lower bounds
apply to all the schedules of the given CDAG, including those with
recomputation, that is, where some vertices of the CDAG are evaluated
multiple times. Among other results, they established an $\BOme{n^3/
  \sqrt{M}}$ lower bound to the \io complexity of the definition-based
matrix multiplication algorithm, which matched a known upper bound
upper bound~\cite{cannon1969cellular}. The techniques
of~\cite{jia1981complexity} have also been extended to obtain tight
communication bounds for the definition-based matrix multiplication in
some parallel settings~\cite{irony2004communication,frigo1999cache}.

Ballard et al.  generalized the results on matrix multiplication of
Hong and Kung~\cite{jia1981complexity} in~\cite{ballard2011minimizing,
  ballard2010communication} by using the approach proposed
in~\cite{irony2004communication} based on the Loomis-Whitney geometric
theorem~\cite{loomis1949,zalg}. The same papers presents tight \io
complexity bounds for various classical linear algebra algorithms, for
problems such as LU/Cholesky/LDLT/QR factorization and eigenvalues and
singular values computation. 
 
It is natural to wonder what is the impact of Strassen's reduction of
the number of arithmetic operations on the number of data transfers.
In an important contribution Ballard et
al.~\cite{ballard2012graph}, obtained an
$\Omega((n/\sqrt{M})^{\log_2 7}M)$ \io lower bound for Strassen's
algorithm, using the ``\emph{edge expansion approach}''. The authors
extend their technique to a class of ``\emph{Strassen-like}'' fast
multiplication algorithms and to fast recursive multiplication
algorithms for rectangular matrices~\cite{ballard2012graphrec}. This
result was later generalized to a broader class of
``\emph{Strassen-like}'' algorithms of by Scott
et. al~\cite{scott2015matrix} using the ``\emph{path routing}''
technique.  A parallel, ``\emph{communication avoiding}''
implementation of Strassen's algorithm whose performance matches the
known lower bound~\cite{ballard2012graph,scott2015matrix}, was
proposed by Ballard et al.~\cite{ballard2012communicationalg}.

The edge expansion technique of~\cite{ballard2012graph}, the path
routing technique of~\cite{scott2015matrix}, and the ``\emph{closed dichotomy
width}'' technique of~\cite{bilardi1999processor} all yield \io lower
bounds that apply only to computational schedules for which no
intermediate result is ever computed more than once
(\emph{nr-computations}). 
While it is of interest to know what is the
\io complexity achievable by nr-computations, it is also
important to investigate what can be achieved with recomputation. In
fact, for some CDAGs, recomputing intermediate values does reduce the
space and/or the \io complexity of an
algorithm~\cite{savage1995extending}.
In~\cite{bilardi2001characterization}, it is shown that some
algorithms admit a \emph{portable schedule} (i.e., a schedule which
achieves optimal performance across memory hierarchies with different
access costs) only if recomputation is allowed. A number of lower
bound techniques that allow for recomputation have been presented in
the literature, including the ``\emph{S-partition}
technique''~\cite{jia1981complexity}, the ``\emph{S-span}
technique''~\cite{savage1995extending}, and the ``\emph{S-covering}
technique''~\cite{bilardi2000space} which merges and extends aspects
from both~\cite{jia1981complexity}
and~\cite{savage1995extending}. None of these has however been
successfully applied to fast matrix multiplication algorithms.

\paragraph*{Our results}

Our main result is the extension of the $\Omega((n/\sqrt{M})^{\log_2
  7}M)$ \io complexity lower bound for Strassen's algorithm to
schedules with recomputation. A matching upper bound is known, and
obtained without recomputation; hence, we can conclude that, for
Strassen's algorithm, recomputation does not help in reducing \io
complexity if not, possibly, by a constant factor. In addition to the
result itself, the proof technique appears to be of independent
interest, since it exploits to a significant extent the
divide\&conquer nature which is exhibited by many algorithms.  We do
follow the dominator-set approach pioneered by Hong and Kung
in~\cite{jia1981complexity}. However, we focus the dominator analysis
only on a select set of target vertices, specifically the outputs of
the sub-CDAGs of Strassen's CDAG that correspond to sub-problems of a
suitable size (a size chosen as a function of the fast memory
capacity, $M$).  Any dominator set of a set of target vertices can be
partitioned into two subsets, one internal and one external to the
sub-CDAGs. The analysis of the external component of the dominator does
require rather elaborate arguments that are specific to Strassen's
CDAG. In contrast, the analysis of the internal component can be
carried out based only on the fact that the sub-CDAGs compute matrix
products, irrespective of the algorithm (in our case, Strassen's) by
which the products are computed.  To achieve this independence of the
algorithm, we resort on the concept of Grigoriev's flow of a
function~\cite{grigor1976application} and on a lower bound to such
flow established by Savage~\cite{savage97models} for matrix
multiplication.

As it turns out, for schedules without recomputation, the analysis of
the external component of the dominator sets becomes unnecessary and
the analysis of the internal components can be somewhat simplified.
The result is a derivation of the \io complexity without recomputation
considerably simpler than those in~\cite{ballard2012graph,
  scott2015matrix}.  The technique is easily extended to a class of
Strassen-like algorithms.

The paper is organized as follows: in Section~\ref{sec:preliminaries},
we provide the details of our model and of several theoretical notions
needed in our analysis. Section~\ref{sect:stranore} describes the
simplified lower bound for Strassen's and Strassen-like algorithms,
when implemented with no recomputation. In Section~\ref{sec:stragen}, we present
the \io complexity lower bound for Strassen's algorithm, when
recomputation is allowed. Extensions of the result to a parallel
model are also discussed.

\section{Preliminaries}\label{sec:preliminaries}
We consider algorithms which compute the product $C=AB$ of $n \times
n$ matrices $A,B$ with entries form a ring $\ri$. Specifically, we
focus on algorithms whose execution, for any given $n$, can be modeled
as a \emph{computational directed acyclic graph} (CDAG) $G(V,E)$,
where each vertex $v\in V$ represents either an input value or the
result of an unit time operation (i.e.  an intermediate result or one
of the output values), while the directed edges in $E$ represent data
dependences. A \emph{directed path} connecting vertices $u,v\in V$ is
an ordered sequence of vertices for which $u$ and $v$ are respectively
the first, and last vertex such that there is in $E$ a (directed) edge
pointing from each vertex in the sequence to its successor. We say
that a CDAG $G'(V',E')$ is a \emph{sub-CDAG} of $G(V,E)$ if
$V'\subseteq V$ and $E' \subseteq E$.
\paragraph*{Model}

We assume that sequential computations are executed on a system with a
two-level memory hierarchy consisting of a fast memory or \emph{cache}
of size $M$, measured in words, and a \emph{slow memory} of
unlimited size. We assume that each memory word can store at most one
value form $\ri$. An operation can be executed only if all its
operands are presents in cache. Data can be moved from the slow memory
to the cache by \emph{read} operations and in the other direction
by \emph{write} operations. Read and write operations are also called
\emph{\io operations}. We assume the input data to be stored in slow
memory at the beginning of the computation. The evaluation of a CDAG in this model can be analyzed by means of the ``\emph{red-blue
  pebble game}''~\cite{jia1981complexity}. The number of \io
operations executed when evaluating a CDAG depends on the
``\emph{computational schedule},'' that is, on the order in which
vertices are evaluated and on which values are kept in/discarded from
cache.  The \emph{\io complexity} $IO_{G}(M)$ of CDAG $G$ is defined
as the minimum number of \io operations over all possible
computational schedules.

We also consider a parallel model where $P$ processors, each with
a local memory of size $M$, are connected by a network. We assume that
the input is initially distributed among the processors, thus
requiring that $MP\geq 2n^2$. Processors can exchange point-to-point
messages among each other. For this model, we derive lower bounds to
the number of words that must be either sent or received by at least
one processor during the CDAG evaluation.

\paragraph*{Preliminary definitions}
The concept of \emph{information flow of a function} was originally
introduced by Grigoriev~\cite{grigor1976application}. We use a revised
formulation presented by Savage~\cite{savage97models}. We remark that
the flow is an inherent property of a function, not of a specific
algorithm by which the function may be computed.
\begin{definition}[Grigoriev's flow of a function]
A function $f:\mathcal{\ri}^p\rightarrow\mathcal{\ri}^q$ has a $w\left(u, v\right)$ Grigoriev's flow if for all subsets $X_1$ and $Y_1$ of its $p$ input and $q$ output variables, with $|X_1| \geq u$ and $|Y_1| \geq  v$, there is a sub-function $h$ of $f$ obtained by making some assignment to variables of $f$ not in $X_1$  and discarding output variables not in $Y_1$ such that $h$ has at least $|\ri|^{w(u,v)}$ points in the image of its domain. 	
\end{definition}
A lower bound on the Grigoriev's flow for the square matrix multiplication function $f_{n \times n}: \ri^{2n^2} \rightarrow \ri^{n^2}$ over the ring $\ri$ was presented by Savage in~\cite{savage97models} (Theorem 10.5.1). 
 
\begin{lemma}[Grigoriev's flow of $f_{n \times n}: \ri^{2n^2} \rightarrow \ri^{n^2}$ ~\cite{savage97models}]\label{lem:info_flo_mat_mul}
	$f_{n \times n}: \ri^{2n^2} \rightarrow \ri^{n^2}$
 has a $w_{n\times n}\left(u, v\right)$ Grigoriev's flow, where:
	\begin{equation}
		w_{n \times n}\left(u, v\right) \geq \frac{1}{2}\left(v-\frac{\left(2n^2 -u\right)^2}{4n^2}\right),\text{ for }0\leq u \leq 2n^2,\ 0\leq v\leq n^2.
	\end{equation}
\end{lemma}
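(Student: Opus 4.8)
The statement is Theorem~10.5.1 of~\cite{savage97models}, and the plan is to reconstruct its argument. Fix arbitrary sets $X_1$ of input variables and $Y_1$ of output variables, with $|X_1|\ge u$ and $|Y_1|\ge v$, and put $\mu:=2n^2-u$, so that at most $\mu$ of the $2n^2$ entries of $A$ and $B$ lie outside $X_1$ and must be frozen to constants; write $\mu_A$ and $\mu_B$ for how many of these lie in $A$ and in $B$, so $\mu_A+\mu_B\le\mu$. To exhibit a sub-function $h$ --- obtained by assigning those constants and discarding the outputs not in $Y_1$ --- whose image has at least $|\ri|^{w}$ points, it suffices to restrict the $X_1$-indexed domain to an affine subspace on which $h$ is the coordinate projection onto some $w$ of the kept outputs: the image of $h$ over the whole domain contains its image over the subspace, which already has $|\ri|^{w}$ points.

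The engine is that multiplying by a $0/1$ selection matrix makes the product copy entries of the other factor, and that every entry of $A$ or $B$ is either outside $X_1$ (so we may set its value) or inside $X_1$ (so we may pin it by shrinking the domain) --- hence $A$ and $B$ can be coerced into any prescribed shape at no cost to the count. Partition the inner index set $[n]$ as $K_A\sqcup K_B$, fix an injection $k\mapsto j_k$ on $K_A$ and an injection $k\mapsto i_k$ on $K_B$, set row $k$ of $B$ to the unit row vector supported at column $j_k$ for every $k\in K_A$, and column $k$ of $A$ to the unit column vector supported at row $i_k$ for every $k\in K_B$. Then $C_{ij}=\sum_{k\in K_A:\,j_k=j}A_{ik}+\sum_{k\in K_B:\,i_k=i}B_{kj}$, so on the $L$-shaped region consisting of the columns $\{j_k:k\in K_A\}$ together with the rows $\{i_k:k\in K_B\}$, minus their intersection, each output $C_{ij}$ collapses to a single input entry --- some $A_{i,k(j)}$ (where $k(j)$ is the unique index of $K_A$ mapped to $j$) or some $B_{k'(i),j}$ --- and the entries obtained this way are pairwise distinct. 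A kept output becomes one of the projected coordinates precisely when it lies in the $L$-region and its associated input entry lies in $X_1$; the extremes $K_A=[n]$ and $K_B=[n]$ make the $L$-region the whole grid (so that every output copies an entry of $A$, resp.\ of $B$), whereas a proper split contracts the product to the all-$X_1$ sub-blocks of $A$ and $B$.

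It remains to choose the partition and the two injections so that the number of captured outputs is at least $\frac12\bigl(v-\frac{\mu^2}{4n^2}\bigr)$. When $\mu_A$ is small one takes $K_A=[n]$ and picks the permutation $k\mapsto j_k$ --- by an averaging argument, sharpened by routing the $Y_1$-heavy output columns onto the frozen-free columns of $A$ --- so that only few kept outputs point at a frozen $A$-entry; symmetrically when $\mu_B$ is small. When both are large one splits $[n]$, the lost outputs being controlled by the product of $|K_A|$ with the frozen fraction of the $K_A$-columns of $A$ and by the $B$-analogue; optimizing the split balances these two terms, and $\mu_A\mu_B\le\mu^2/4$ converts the bound into $\mu^2/(4n^2)=(2n^2-u)^2/(4n^2)$, the leading factor $\frac12$ being the slack of the balancing. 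Whenever the resulting quantity is non-positive the inequality is trivial, a Grigoriev flow being non-negative by definition. The crux --- and the step I expect to be hardest --- is exactly this last optimization: an adversarial $X_1$ can pack the frozen entries of $A$ into the rows where $Y_1$ is dense while arranging those of $B$ to defeat the symmetric routing and to destroy every large all-$X_1$ sub-block, so one must show that trading off the three configurations still captures the claimed fraction, and in particular must produce the precise quadratic correction rather than a cruder loss linear in $\mu$. Keeping every selection matrix strictly $0/1$, so that the argument never divides and is valid over an arbitrary ring $\ri$, is a minor constraint carried throughout.
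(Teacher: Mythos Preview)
The paper does not prove Lemma~\ref{lem:info_flo_mat_mul}; it is quoted as Theorem~10.5.1 of Savage~\cite{savage97models} with no argument supplied. So there is no ``paper's proof'' to compare against beyond the citation itself, which you have correctly identified.

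As a reconstruction of Savage's argument, your framework is on the right track: coercing one factor into a $0/1$ selection pattern so that product entries become copies of entries of the other factor, then counting how many $Y_1$-outputs land on $X_1$-inputs, is the standard mechanism, and your observation that restricting the domain to an affine slice can only shrink the image (so a lower bound on the slice suffices) is sound. But what you have written is a plan, not a proof. The step you yourself flag as the crux --- choosing the partition $K_A\sqcup K_B$ and the two injections so that at least $\tfrac12\bigl(v-\mu^2/(4n^2)\bigr)$ outputs are captured --- is only gestured at (``optimizing the split balances these two terms''). You have not exhibited an explicit lower bound on the capture count as a function of $|K_A|,|K_B|,\mu_A,\mu_B$ and the adversary's placement of $Y_1$ and of the frozen entries, let alone shown that its max over your choices and min over the adversary meets the target. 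The appearance of $\mu_A\mu_B\le\mu^2/4$ and of the leading factor $\tfrac12$ are asserted, not derived; in particular, your averaging sketch for the extreme $K_A=[n]$ already fails when $Y_1$ is concentrated on rows of $A$ that are heavily frozen, so the ``routing'' refinement you allude to must do real work that is nowhere spelled out. Until that counting is actually carried out, the proposal has a genuine gap at precisely the point where the quantitative content of the lemma lives.
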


The \emph{dominator set} concept was originally introduced in~\cite{jia1981complexity}. 

\begin{definition}[Dominator set]
Given a CDAG $G(V,E)$, let $I\subset V$ denote the set of input
vertices. A \emph{dominator set} for $V'\subseteq V$ is a set $\Gamma
\subseteq V$ such that every path from a vertex in $I$ to a vertex in
$V'$ contains at least a vertex of $\Gamma$. A \emph{minimum dominator
  set} for $V'$ is a dominator set with minimum cardinality.
\end{definition}

We will use a specular concept, commonly referred as
``\emph{post-dominator set}''.

\begin{definition}[Post-dominator set]
Given a CDAG $G(V,E)$, let $O\subset V$ denote the set of output vertices. A \emph{post-dominator set} for $V'\subseteq V\setminus O$ with respect to $O'\subseteq O$ is defined to be a set of vertices in $V$ such that every path from a vertex in $V'$ to the output vertices in $O'$ contains at least a vertex of the set.
A \emph{minimum post-dominator} set for $V'\subseteq V\setminus O$ with respect to $O'\subseteq O$ is a post-dominator set with minimum cardinality.
\end{definition}

\noindent The following lemma relates the size of certain post-dominator sets to
the Grigoriev's flow.

\begin{lemma}\label{lem:flowpost}
Let $G(V,E)$ be a CDAG computing $f:\mathcal{\ri}^p\rightarrow\mathcal{\ri}^q$ with Grigoriev's flow
$w_f(u,v)$. Let $I$ (resp., $O$) denote the set of input (resp.,
output) vertices of $G$. Any post-dominator set $\Gamma$ for any
subset $I'\subseteq I$ with respect to any subset $O'\subseteq O$
satisfies $|\Gamma| \geq w_f(|I'|,|O'|)$.
\end{lemma}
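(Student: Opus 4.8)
The plan is to show that the vertices of $\Gamma$ act as an information bottleneck: once the input variables outside $I'$ are fixed, everything the output vertices in $O'$ can ``know'' about the inputs in $I'$ is conveyed through the values carried by $\Gamma$. Consequently the sub-function of $f$ obtained by fixing those variables and keeping only the outputs in $O'$ can take at most $|\ri|^{|\Gamma|}$ distinct values, and comparing this with the number of image points guaranteed by the definition of Grigoriev's flow gives the bound.

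Concretely, I would first instantiate the definition of Grigoriev's flow with $X_1=I'$ and $Y_1=O'$, which is admissible since trivially $|I'|\ge|I'|$ and $|O'|\ge|O'|$. This yields an assignment $\alpha$ of ring values to the input variables not in $I'$, together with the sub-function $h\colon\ri^{|I'|}\to\ri^{|O'|}$ obtained by applying $\alpha$ and discarding the outputs outside $O'$, such that $|h(\ri^{|I'|})|\ge|\ri|^{w_f(|I'|,|O'|)}$. Next, evaluate the CDAG $G$ with its inputs outside $I'$ pinned to $\alpha$: each $x\in\ri^{|I'|}$ then induces a value on every vertex of $G$, in particular a tuple $\pi(x)\in\ri^{|\Gamma|}$ on the vertices of $\Gamma$, while the vertices of $O'$ carry exactly $h(x)$.

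The core step is the claim that $h(x)$ depends on $x$ only through $\pi(x)$. Fix an output $o\in O'$; if $o\in\Gamma$ its value is literally a coordinate of $\pi(x)$, so assume $o\notin\Gamma$ and let $R_o$ be the set of vertices $v\notin\Gamma$ from which $o$ is reachable by a directed path that meets no vertex of $\Gamma$. Then $o\in R_o$, and $R_o\cap I'=\emptyset$, since a vertex of $I'$ in $R_o$ would witness a $\Gamma$-avoiding path from $I'$ to $o\in O'$, contradicting that $\Gamma$ post-dominates $I'$ with respect to $O'$. Moreover, every in-neighbour $u$ of a vertex $v\in R_o$ lies in $\Gamma\cup R_o$: prepending $u$ to a $\Gamma$-avoiding $v$-to-$o$ path (acyclicity guarantees $u$ is not already on it) keeps it $\Gamma$-avoiding unless $u\in\Gamma$. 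Now process the vertices of $R_o$ in topological order; each is either an input vertex, which is then necessarily in $I\setminus I'$ and hence fixed by $\alpha$, or is computed from in-neighbours that are in $\Gamma$ (read off from $\pi(x)$), already processed within $R_o$, or fixed inputs. By induction the value on every vertex of $R_o$ -- in particular the value at $o$ -- is a function of $\pi(x)$, and running this over all $o\in O'$ shows that $h(x)$ is a function of $\pi(x)$.

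It then follows that $|h(\ri^{|I'|})|\le|\{\pi(x):x\in\ri^{|I'|}\}|\le|\ri|^{|\Gamma|}$; chaining with $|h(\ri^{|I'|})|\ge|\ri|^{w_f(|I'|,|O'|)}$ and taking logarithms in base $|\ri|\ge 2$ yields $|\Gamma|\ge w_f(|I'|,|O'|)$. I expect the only place requiring genuine care to be the reachability bookkeeping in the core step, namely verifying that no dependence on $I'$ can enter $R_o$ except through $\Gamma$; this is precisely where the defining property of a post-dominator set is invoked, so the obstacle is one of conceptual precision rather than of any hard estimate.
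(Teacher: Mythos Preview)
Your argument is correct and follows essentially the same approach as the paper: fix the inputs outside $I'$ to the assignment provided by the definition of Grigoriev's flow, observe that the outputs in $O'$ are then determined by the values carried on $\Gamma$ (since every path from $I'$ to $O'$ passes through $\Gamma$), and compare the resulting upper bound $|\ri|^{|\Gamma|}$ on the image size with the lower bound $|\ri|^{w_f(|I'|,|O'|)}$ guaranteed by the flow. The paper asserts the determination step in one sentence, whereas you spell out the reachability bookkeeping via $R_o$ and topological induction, but the underlying idea is identical.
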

\begin{proof}
Given $I'\subseteq I$ and $O'\subseteq O$, suppose the values of the
input variables corresponding to vertices in $I \setminus I'$ to be
fixed. Let $\Gamma$ be a post-dominator set for $I'$ with respect to
$O'$.\\ (i) According to the definition of flow of $f$, there exists an
assignment of the input variables corresponding to vertices in $I'$
such that the output variables in $O'$ can assume
$|\ri|^{w_{f}\left(|I'|,|O'|\right)}$ distinct values. (ii) As
there is no path from $I'$ to $O'$ which has not a vertex in $\Gamma$,
the values of the outputs in $O'$ are determined by the inputs in
$I\setminus I'$, which are fixed, and the values corresponding to the
vertices in $\Gamma$; hence the outputs in $O'$ can assume at most
$|\ri|^{|\Gamma|}$ distinct values. The claimed results follows
by a simple combination of observations (i) and (ii).
\end{proof}

\paragraph*{Properties of Strassen's Algorithm}
Consider Strassen's algorithm~\cite{strassen1969gaussian} when used to
compute $C=AB$, with $A$ and $B$ matrices $n\times n$ with entries
from the ring $\ri$ (see Algorithm~\ref{alg:strass} in
Appendix~\ref{app:proof_encoder} for a detailed presentation).
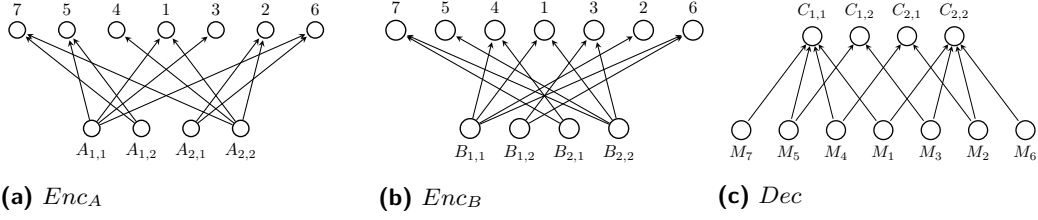
\begin{figure}
\begin{subfigure}{.31\textwidth}
  \centering
        \resizebox{\linewidth}{!}{
            \begin{tikzpicture}[scale = 0.5,
            > = stealth, 
            shorten > = 1pt, 
            auto,
            semithick 
        ]

        \tikzstyle{every state}=[
            draw = black,
            thick,
            minimum size = 1mm
        ]

        \node[state] at (0,0) (11) [label=below:$A_{1,1}$] {};
        \node[state] at (2,0) (12)  [label=below:$A_{1,2}$] {};
        \node[state] at (4,0) (21)  [label=below:$A_{2,1}$] {};
        \node[state] at (6,0) (22)  [label=below:$A_{2,2}$] {};
        \node[state] at (-3,4) (7)  [label=above:$7$] {};
        \node[state] at (-1,4) (5)  [label=above:$5$] {};
        \node[state] at (1,4) (4)  [label=above:$4$] {};
        \node[state] at (3,4) (1)  [label=above:$1$] {};
        \node[state] at (5,4) (3)  [label=above:$3$] {};
        \node[state] at (7,4) (2)  [label=above:$2$] {};
        \node[state] at (9,4) (6)  [label=above:$6$] {};

        \path[->] (11) edge  (5);
        \path[->] (11) edge  (1);
        \path[->] (11) edge  (3);
        \path[->] (11) edge  (6);
        
        \path[->] (12) edge  (7);
        \path[->] (12) edge  (5);
        
        \path[->] (21) edge  (2);
        \path[->] (21) edge  (6);
        
        \path[->] (22) edge  (2);
        \path[->] (22) edge  (1);
        \path[->] (22) edge  (4);
        \path[->] (22) edge  (7);
    \end{tikzpicture}
        }
  \caption{$Enc_A$}
  \label{fig:enca}
\end{subfigure}
\hspace{4mm}
\begin{subfigure}{.31\textwidth}
  \centering
   \resizebox{\linewidth}{!}{
  \begin{tikzpicture}[scale = 0.5,
            > = stealth, 
            shorten > = 1pt, 
            auto,
            node distance = 3cm, 
            semithick 
        ]

        \tikzstyle{every state}=[
            draw = black,
            thick,
            fill = white,
            minimum size = 4mm
        ]

        \node[state] at (0,0) (11) [label=below:$B_{1,1}$] {};
        \node[state] at (2,0) (12)  [label=below:$B_{1,2}$] {};
        \node[state] at (4,0) (21)  [label=below:$B_{2,1}$] {};
        \node[state] at (6,0) (22)  [label=below:$B_{2,2}$] {};
        \node[state] at (-3,4) (7)  [label=above:$7$] {};
        \node[state] at (-1,4) (5)  [label=above:$5$] {};
        \node[state] at (1,4) (4)  [label=above:$4$] {};
        \node[state] at (3,4) (1)  [label=above:$1$] {};
        \node[state] at (5,4) (3)  [label=above:$3$] {};
        \node[state] at (7,4) (2)  [label=above:$2$] {};
        \node[state] at (9,4) (6)  [label=above:$6$] {};

        \path[->] (11) edge  (4);
        \path[->] (11) edge  (1);
        \path[->] (11) edge  (2);
        \path[->] (11) edge  (6);
        
        \path[->] (12) edge  (3);
        \path[->] (12) edge  (6);
        
        \path[->] (21) edge  (7);
        \path[->] (21) edge  (4);
        
        \path[->] (22) edge  (7);
        \path[->] (22) edge  (1);
        \path[->] (22) edge  (5);
        \path[->] (22) edge  (3);
    \end{tikzpicture}
    }
  \caption{$Enc_B$}
  \label{fig:encb}
\end{subfigure}
\begin{subfigure}{.31\textwidth}
  \centering
     \resizebox{\linewidth}{!}{
  \begin{tikzpicture}[scale = 0.5,
            > = stealth, 
            shorten > = 1pt, 
            auto,
            node distance = 3cm, 
            semithick 
        ]

        \tikzstyle{every state}=[
            draw = black,
            thick,
            fill = white,
            minimum size = 4mm
        ]

        \node[state] at (0,4) (C11) [label=above:$C_{1,1}$] {};
        \node[state] at (2,4) (C12)  [label=above:$C_{1,2}$] {};
        \node[state] at (4,4) (C21)  [label=above:$C_{2,1}$] {};
        \node[state] at (6,4) (C22)  [label=above:$C_{2,2}$] {};
        \node[state] at (-3,0) (7)  [label=below:$M_7$] {};
        \node[state] at (-1,0) (5)  [label=below:$M_5$] {};
        \node[state] at (1,0) (4)  [label=below:$M_4$] {};
        \node[state] at (3,0) (1)  [label=below:$M_1$] {};
        \node[state] at (5,0) (3)  [label=below:$M_3$] {};
        \node[state] at (7,0) (2)  [label=below:$M_2$] {};
        \node[state] at (9,0) (6)  [label=below:$M_6$] {};

        \path[<-] (C11) edge  (4);
        \path[<-] (C11) edge  (1);
        \path[<-] (C11) edge  (7);
        \path[<-] (C11) edge  (5);
        
        \path[<-] (C12) edge  (3);
        \path[<-] (C12) edge  (5);
        
        \path[<-] (C21) edge  (2);
        \path[<-] (C21) edge  (4);
        
        \path[<-] (C22) edge  (6);
        \path[<-] (C22) edge  (1);
        \path[<-] (C22) edge  (2);
        \path[<-] (C22) edge  (3);
    \end{tikzpicture}
    }
  \caption{$Dec$}
  \label{fig:dec}
\end{subfigure}
\caption[Basic constructing blocks of Strassen's algorithm CDAG]{Basic constructing blocks of Strassen's CDAG. Note that $Enc_A$ and $Enc_B$ are isomorphic.}
\label{fig:fig}
\end{figure}
Let $H^{n \times n}$ denote the corresponding CDAG. For $n\geq 2$, $H^{n \times n}$ can be obtained by using a recursive construction which mirrors the recursive structure of the algorithm. The base of the construction is the $H^{2 \times 2}$ CDAG which corresponds to the multiplication of two $2 \times 2$ matrices using Strassen's algorithm (Figure~\ref{fig:bsestra}). $H^{2n \times 2n}$ can then be constructed by composing seven copies of $H^{n \times n}$, each corresponding to one of the seven sub-products generated by the algorithm (see Figure~\ref{fig:strarec}):  
  $n^2$ vertex-disjoint copies of CDAGs $Enc_A$ (resp., $Enc_B$) are used to connect the input vertices of $H^{2n \times 2n}$ which correspond to the values of the input matrix $A$ (resp., $B$) to the appropriate input vertices of the seven sub-CDAGs $H^{n \times n}_i$; the output vertices sub-CDAGs $H^{n \times n}_i$ (which  correspond to outputs of the seven sub-products) are connected to the opportune output vertices of the entire $H^{2n \times 2n}$ CDAG  using $n^2$ copies of the decoder sub-CDAG $Dec$.
  
  \begin{figure}[bt]
  \begin{subfigure}{.48\textwidth}
  \centering
     \resizebox{\linewidth}{!}{
     \begin{tikzpicture}[scale = 0.5,
            > = stealth, 
            shorten > = 1pt, 
            auto,
            node distance = 3cm, 
            semithick 
        ]

        \tikzstyle{every state}=[
            draw = black,
            thick,
            minimum size = 4mm
        ]
		\node[state] at (0,0) (A11) [label=below:$A_{1,1}$] {};
        \node[state] at (2,0) (A12)  [label=below:$A_{1,2}$] {};
        \node[state] at (4,0) (A21)  [label=below:$A_{2,1}$] {};
        \node[state] at (6,0) (A22)  [label=below:$A_{2,2}$] {};
        \node[state] at (-3,4) (A7)  [fill = blue] {};
        \node[state] at (-1,4) (A5)  [fill = blue] {};
        \node[state] at (1,4) (A4)    {};
        \node[state] at (3,4) (A1)   [fill = blue]{};
        \node[state] at (5,4) (A3)    {};
        \node[state] at (7,4) (A2)   [fill = blue] {};
        \node[state] at (9,4) (A6)   [fill = blue] {};

        \path[->] (A11) edge  (A5);
        \path[->] (A11) edge  (A1);
        \path[->] (A11) edge  (A3);
        \path[->] (A11) edge  (A6);
        
        \path[->] (A12) edge  (A7);
        \path[->] (A12) edge  (A5);
        
        \path[->] (A21) edge  (A2);
        \path[->] (A21) edge  (A6);
        
        \path[->] (A22) edge  (A2);
        \path[->] (A22) edge  (A1);
        \path[->] (A22) edge  (A4);
        \path[->] (A22) edge  (A7);
        
        \node[state] at (14,0) (B11) [label=below:$B_{1,1}$] {};
        \node[state] at (16,0) (B12)  [label=below:$B_{1,2}$] {};
        \node[state] at (18,0) (B21)  [label=below:$B_{2,1}$] {};
        \node[state] at (20,0) (B22)  [label=below:$B_{2,2}$] {};
        \node[state] at (11,4) (B7)  [fill = blue] {};
        \node[state] at (13,4) (B5)  {};
        \node[state] at (15,4) (B4)  [fill = blue] {};
        \node[state] at (17,4) (B1)  [fill = blue] {};
        \node[state] at (19,4) (B3)  [fill = blue] {};
        \node[state] at (21,4) (B2)   {};
        \node[state] at (23,4) (B6)  [fill = blue] {};

        \path[->] (B11) edge  (B4);
        \path[->] (B11) edge  (B1);
        \path[->] (B11) edge  (B2);
        \path[->] (B11) edge  (B6);
        
        \path[->] (B12) edge  (B3);
        \path[->] (B12) edge  (B6);
        
        \path[->] (B21) edge  (B7);
        \path[->] (B21) edge  (B4);
        
        \path[->] (B22) edge  (B7);
        \path[->] (B22) edge  (B1);
        \path[->] (B22) edge  (B5);
        \path[->] (B22) edge  (B3);

        \node[state] at (7,12) (C11) [label=above:$C_{1,1}$] {};
        \node[state] at (9,12) (C12)  [label=above:$C_{1,2}$] {};
        \node[state] at (11,12) (C21)  [label=above:$C_{2,1}$] {};
        \node[state] at (13,12) (C22)  [label=above:$C_{2,2}$] {};
        \node[state] at (2.5,8) (7)  [fill = red, label=left:$M_7$] {};
        \node[state] at (5,8) (5)  [fill = red, label=left:$M_5$] {};
        \node[state] at (7.5,8) (4)  [fill = red, label=left:$M_4$] {};
        \node[state] at (10,8) (1)  [fill = red, label=left:$M_1$] {};
        \node[state] at (12.5,8) (3)  [fill = red, label=left:$M_3$] {};
        \node[state] at (15,8) (2)  [fill = red, label=left:$M_2$] {};
        \node[state] at (17.5,8) (6)  [fill = red, label=left:$M_6$] {};
        
        \node[state, draw= white] at (7,1.5) (ena)  [label=right: \LARGE $Enc_A$] {};
        \node[state, draw= white] at (20,1.5) (enb)  [label=right: \LARGE $Enc_B$] {};
        \node[state, draw= white] at (0,10.5) (dec)  [label=right: \LARGE$Dec$] {};

        \path[<-] (C11) edge  (4);
        \path[<-] (C11) edge  (1);
        \path[<-] (C11) edge  (7);
        \path[<-] (C11) edge  (5);
        
        \path[<-] (C12) edge  (3);
        \path[<-] (C12) edge  (5);
        
        \path[<-] (C21) edge  (2);
        \path[<-] (C21) edge  (4);
        
        \path[<-] (C22) edge  (6);
        \path[<-] (C22) edge  (1);
        \path[<-] (C22) edge  (2);
        \path[<-] (C22) edge  (3);
        
        \path[->] (A7) edge  (7);
        \path[->] (B7) edge  (7);
        \path[->] (A5) edge  (5);
        \path[->] (B5) edge  (5);
        \path[->] (A4) edge  (4);
        \path[->] (B4) edge  (4);
        \path[->] (A1) edge  (1);
        \path[->] (B1) edge  (1);
        \path[->] (A3) edge  (3);
        \path[->] (B3) edge  (3);
        \path[->] (A2) edge  (2);
        \path[->] (B2) edge  (2);
        \path[->] (A6) edge  (6);
        \path[->] (B6) edge  (6);
    \end{tikzpicture}
     }
     \caption{Strassen's $H^{2\times 2}$ CDAG}
     \label{fig:bsestra}
  \end{subfigure}	
  \hspace{1mm}
  \begin{subfigure}{.49\textwidth}
  \centering
     \resizebox{\linewidth}{!}{
     	\begin{tikzpicture}[
            > = stealth, 
            shorten > = 1pt, 
            auto,
            node distance = 2cm, 
            semithick 
        ]

        \tikzstyle{every state}=[
            draw = black,
            thick,
            minimum size = 8mm
        ]
		\node[state, tokens=4] at (0,0) (A11) [label=below: \huge $A_{1,1}$] {};
        \node[state, tokens=4] at (2,0) (A12)  [label=below:\huge $A_{1,2}$] {};
        \node[state, tokens=4] at (4,0) (A21)  [label=below:\huge $A_{2,1}$] {};
        \node[state, tokens=4] at (6,0) (A22)  [label=below:\huge $A_{2,2}$] {};
        \node[state, draw = blue, tokens=4] at (-3,4) (A7)   {};
        \node[state, draw = blue, tokens=4] at (-1,4) (A5)   {};
        \node[state, tokens=4] at (1,4) (A4)    {};
        \node[state, draw = blue, tokens=4] at (3,4) (A1)    {};
        \node[state, tokens=4] at (5,4) (A3)    {};
        \node[state, draw = blue, tokens=4] at (7,4) (A2)    {};
        \node[state, draw = blue, tokens=4] at (9,4) (A6)    {};

        \path[->] (A11) edge  (A5);
        \path[->] (A11) edge  (A1);
        \path[->] (A11) edge  (A3);
        \path[->] (A11) edge  (A6);
        
        \path[->] (A12) edge  (A7);
        \path[->] (A12) edge  (A5);
        
        \path[->] (A21) edge  (A2);
        \path[->] (A21) edge  (A6);
        
        \path[->] (A22) edge  (A2);
        \path[->] (A22) edge  (A1);
        \path[->] (A22) edge  (A4);
        \path[->] (A22) edge  (A7);
        
        \node[state, tokens=4] at (14,0) (B11) [label=below:\huge $B_{1,1}$] {};
        \node[state, tokens=4] at (16,0) (B12)  [label=below:\huge $B_{1,2}$] {};
        \node[state, tokens=4] at (18,0) (B21)  [label=below:\huge $B_{2,1}$] {};
        \node[state, tokens=4] at (20,0) (B22)  [label=below:\huge $B_{2,2}$] {};
        \node[state, draw = blue, tokens=4] at (11,4) (B7)   {};
        \node[state, tokens=4] at (13,4) (B5)  {};
        \node[state, draw = blue, tokens=4] at (15,4) (B4)   {};
        \node[state, draw = blue, tokens=4] at (17,4) (B1)  {};
        \node[state, draw = blue, tokens=4] at (19,4) (B3)   {};
        \node[state, tokens=4] at (21,4) (B2)   {};
        \node[state, draw = blue, tokens=4] at (23,4) (B6)   {};

        \path[->] (B11) edge  (B4);
        \path[->] (B11) edge  (B1);
        \path[->] (B11) edge  (B2);
        \path[->] (B11) edge  (B6);
        
        \path[->] (B12) edge  (B3);
        \path[->] (B12) edge  (B6);
        
        \path[->] (B21) edge  (B7);
        \path[->] (B21) edge  (B4);
        
        \path[->] (B22) edge  (B7);
        \path[->] (B22) edge  (B1);
        \path[->] (B22) edge  (B5);
        \path[->] (B22) edge  (B3);

        \node[state, tokens=4] at (7,12) (C11) [label=above:\huge $C_{1,1}$] {};
        \node[state, tokens=4] at (9,12) (C12)  [label=above:\huge $C_{1,2}$] {};
        \node[state, tokens=4] at (11,12) (C21)  [label=above:\huge $C_{2,1}$] {};
        \node[state, tokens=4] at (13,12) (C22)  [label=above:\huge $C_{2,2}$] {};
        \node[state, draw = red] at (1,8) (7)  {\huge $H^{n\times n}_7$};
        \node[state, draw = red] at (4,8) (5)  {\huge $H^{n\times n}_5$};
        \node[state, draw = red] at (7,8) (4)  {\huge$H^{n\times n}_4$};
        \node[state, draw = red] at (10,8) (1)  {\huge $H^{n\times n}_1$};
        \node[state, draw = red] at (13,8) (3) {\huge $H^{n\times n}_3$};
        \node[state, draw = red] at (16,8) (2)  {\huge $H^{n\times n}_2$};
        \node[state, draw = red] at (19,8) (6)  {\huge $H^{n\times n}_6$};
        
        \node[state, draw= white] at (7,1.5) (ena)  [label=right: \Huge $n^2\times Enc_A$] {};
        \node[state, draw= white] at (19.5,1.5) (enb)  [label=right: \Huge $n^2 \times Enc_B$] {};
        \node[state, draw= white] at (-1,10.5) (dec)  [label=right:\Huge $n^2 \times Dec$] {};

        \path[<-] (C11) edge  (4);
        \path[<-] (C11) edge  (1);
        \path[<-] (C11) edge  (7);
        \path[<-] (C11) edge  (5);
        
        \path[<-] (C12) edge  (3);
        \path[<-] (C12) edge  (5);
        
        \path[<-] (C21) edge  (2);
        \path[<-] (C21) edge  (4);
        
        \path[<-] (C22) edge  (6);
        \path[<-] (C22) edge  (1);
        \path[<-] (C22) edge  (2);
        \path[<-] (C22) edge  (3);
        
        \path[->] (A7) edge  (7);
        \path[->] (B7) edge  (7);
        \path[->] (A5) edge  (5);
        \path[->] (B5) edge  (5);
        \path[->] (A4) edge  (4);
        \path[->] (B4) edge  (4);
        \path[->] (A1) edge  (1);
        \path[->] (B1) edge  (1);
        \path[->] (A3) edge  (3);
        \path[->] (B3) edge  (3);
        \path[->] (A2) edge  (2);
        \path[->] (B2) edge  (2);
        \path[->] (A6) edge  (6);
        \path[->] (B6) edge  (6);
    \end{tikzpicture}
     }
     \caption{Recursive construction of the $H^{2n\times 2n}$ CDAG}
     \label{fig:strarec}
  \end{subfigure}

  \caption{Blue vertices represent combinations of the input values from the factor matrices $A$ and $B$ which are used as input values for the sub-problems $M_i$; red vertices represent the output of the seven sub-problems which are used to compute the output values of the product matrix $C$.}
\end{figure}
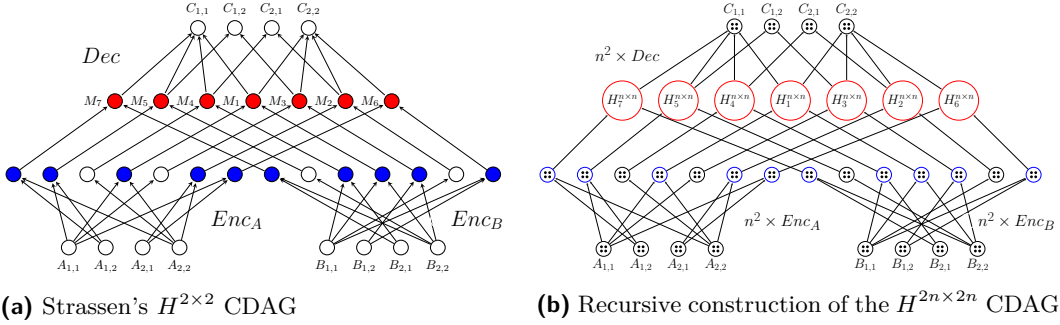 

In our proofs we will leverage the following two properties of Strassen's CDAG:
\begin{lemma}\label{lem:disictness_sub_cdag}
Let $H^{n \times n}$  denote the CDAG of Strassen's algorithm for input matrices of size $n \times n$. For $0 \leq i \leq \log n-1$, there are exactly $7^i$ sub-CDAGs $H^{n/2^i \times n/2^i}$ which do not share any vertex in $H^{n \times n}$ (i.e., they are \emph{vertex disjoint sub-CDAGs of $H^{n \times n}$}).
\end{lemma}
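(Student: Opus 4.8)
The plan is to proceed by induction on $i$, unrolling the recursive construction of Strassen's CDAG one level at a time (here $n$ is implicitly a power of two, as in the recursive construction). The base case $i=0$ is immediate: $\{H^{n\times n}\}$ is a one-element family of pairwise vertex-disjoint sub-CDAGs, each isomorphic to $H^{n/2^0\times n/2^0}=H^{n\times n}$, and $7^0=1$. For the inductive step the only fact I need about the construction is the following ``one-level'' statement: for every power of two $m\geq 2$, the CDAG $H^{m\times m}$ contains seven pairwise vertex-disjoint sub-CDAGs, each isomorphic to $H^{m/2\times m/2}$.

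I would establish this one-level statement directly from the recursive construction depicted in Figure~\ref{fig:strarec} (with base case Figure~\ref{fig:bsestra}). By definition, $H^{m\times m}$ is assembled from seven copies $H^{m/2\times m/2}_1,\dots,H^{m/2\times m/2}_7$ together with $(m/2)^2$ copies each of $Enc_A$, $Enc_B$ and $Dec$. Each copy $H^{m/2\times m/2}_j$ has its own private set of input vertices (the ``blue'' vertices feeding sub-product $j$) and its own private set of output vertices (the ``red'' vertices produced by sub-product $j$); such an interface vertex is shared only between $H^{m/2\times m/2}_j$ and an encoder (resp.\ decoder) gadget, never between two distinct sub-CDAGs $H^{m/2\times m/2}_j$ and $H^{m/2\times m/2}_k$ with $j\neq k$, and the internal vertices of the seven copies are distinct by construction as well. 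Hence the seven sub-CDAGs are pairwise vertex-disjoint, which is exactly the one-level statement. (The requirement $m/2\geq 1$, i.e.\ $m\geq 2$, is why the lemma is restricted to $i\leq \log_2 n-1$.)

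Given the one-level statement, the inductive step is pure bookkeeping. Assume the claim for $i-1$; it applies to $H^{n/2\times n/2}$ since $i-1\leq \log_2(n/2)-1$. Apply it inside each of the seven vertex-disjoint copies $H^{n/2\times n/2}_j$ sitting in $H^{n\times n}$: each such copy contains $7^{i-1}$ pairwise vertex-disjoint sub-CDAGs isomorphic to $H^{(n/2)/2^{i-1}\times (n/2)/2^{i-1}}=H^{n/2^i\times n/2^i}$. Since the seven copies $H^{n/2\times n/2}_j$ are themselves pairwise vertex-disjoint in $H^{n\times n}$, taking the union over $j=1,\dots,7$ yields $7\cdot 7^{i-1}=7^i$ pairwise vertex-disjoint sub-CDAGs of $H^{n\times n}$, each isomorphic to $H^{n/2^i\times n/2^i}$; these are precisely the sub-CDAGs obtained by unrolling Strassen's recursive construction to depth $i$, so their number is exactly $7^i$, completing the induction.

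The argument is entirely routine; the one point that genuinely requires care — and which I would spell out most explicitly — is the verification, in the one-level statement, that vertex-disjointness is not destroyed at the encoder/decoder interfaces, i.e.\ that no vertex belongs to two of the seven sub-CDAGs simultaneously. Once this is checked, no further structural property of Strassen's CDAG is needed.
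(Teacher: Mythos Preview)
Your proposal is correct and takes essentially the same approach as the paper: verify the one-level vertex-disjointness of the seven sub-CDAGs from the recursive construction, then propagate through the levels. The paper's proof is terser (it phrases the one-level check as ``none of the seven sub-problems share any of their input values'' and simply asserts that this holds at every recursive level), while you spell out the induction and the encoder/decoder interface check explicitly, but the content is the same.
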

\begin{lemma}\label{lem:conneconder}
Given an encoder CDAG, for any subset $Y$ of its output vertices, there exists a sub-set $X$ of its input vertices such that each vertex in $X$ can be connected to a distinct vertex in $Y$ where $\min\{|Y|, 1 + \lceil\left(|Y|-1\right)/2\rceil \}\leq |X|\leq |Y|$.
\end{lemma}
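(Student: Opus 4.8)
The plan is to reduce the statement to an elementary matching problem on the (fixed, finite) bipartite graph underlying the encoder, and then to settle that problem with the deficiency version of Hall's theorem. Since $Enc_A$ and $Enc_B$ are isomorphic, it suffices to argue for $Enc_A$. View $Enc_A$ as a bipartite graph whose left class is the set of its four input vertices $\{A_{1,1},A_{1,2},A_{2,1},A_{2,2}\}$, whose right class is the set of its seven output vertices, and whose edges are the connections displayed in Figure~\ref{fig:enca}; for a set $S$ of output vertices let $N(S)$ be the set of input vertices joined to some vertex of $S$. A set $X$ as in the statement is exactly the set of input endpoints of a matching between the inputs and $Y$, so any such matching already gives $|X|\le|Y|$, and it remains to produce one of size at least $\min\{|Y|,1+\lceil(|Y|-1)/2\rceil\}$.

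First I would let $X$ be the set of inputs saturated by a \emph{maximum} matching between the inputs and $Y$; then $|X|\le|Y|$ automatically, and by the deficiency version of Hall's theorem $|X|=|Y|-\max_{S\subseteq Y}(|S|-|N(S)|)$. It therefore suffices to prove
\[
\max_{S\subseteq Y}\,(|S|-|N(S)|)\;\le\;\lfloor(|Y|-1)/2\rfloor ,
\]
because then $|X|\ge|Y|-\lfloor(|Y|-1)/2\rfloor=1+\lceil(|Y|-1)/2\rceil\ge\min\{|Y|,1+\lceil(|Y|-1)/2\rceil\}$. Since $|S|\le|Y|$ for every $S\subseteq Y$, this in turn reduces to the single, algorithm-specific inequality
\[
|N(S)|\;\ge\;\lceil(|S|+1)/2\rceil\qquad\text{for every nonempty set $S$ of output vertices of $Enc_A$,}
\]
since then $|S|-|N(S)|\le|S|-\lceil(|S|+1)/2\rceil=\lfloor(|S|-1)/2\rfloor\le\lfloor(|Y|-1)/2\rfloor$.

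It remains to prove this last inequality, and here I would use its reformulation: for $k=1,2,3$, any $k$ input vertices are the common in-neighbourhood of at most $2k-1$ output vertices (for $k=4$ the bound $7\le7$ is vacuous), so that plugging $k=|N(S)|$ into $|S|\le 2k-1$ gives $|N(S)|\ge\lceil(|S|+1)/2\rceil$. This is a finite verification on the adjacency pattern of $Enc_A$: the only two outputs of in-degree one have the distinct sole in-neighbours $A_{1,1}$ and $A_{2,2}$, so no singleton dominates two outputs ($k=1$); the only pair of inputs dominating three outputs is $\{A_{1,1},A_{2,2}\}$, every other pair dominating at most two ($k=2$); and the only triples dominating five outputs are those obtained by deleting $A_{1,2}$ or by deleting $A_{2,1}$, every other triple dominating at most three ($k=3$). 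The only non-routine ingredient is this last finite check, and the only point needing care there is verifying that the exhibited configurations really are the extremal ones; it presents no genuine difficulty, but it is tied to the specific combinatorics of Strassen's encoder, so handling a ``Strassen-like'' scheme would require the analogous verification for its own encoder.
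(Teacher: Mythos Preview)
Your argument is correct. The reduction via the defect form of Hall's theorem is clean: a maximum matching between $Y$ and the inputs has size $|Y|-\max_{S\subseteq Y}(|S|-|N(S)|)$, and bounding the deficiency by $\lfloor(|Y|-1)/2\rfloor$ via the neighbourhood inequality $|N(S)|\ge\lceil(|S|+1)/2\rceil$ is exactly what is needed. Your dual reformulation (for each $k$, any $k$ inputs dominate at most $2k-1$ outputs) is the right move, and your enumeration for $k=1,2,3$ checks out against the adjacency lists of $Enc_A$; the $k=4$ case is not ``vacuous'' so much as trivially tight ($7\le 7$), but that is a wording quibble. The only case your chain of inequalities does not literally cover is $|Y|=0$, where $\lfloor(|Y|-1)/2\rfloor=-1$ while the deficiency is $0$; this is harmless since the claimed lower bound is then $\min\{0,1\}=0$.

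This is a genuinely different route from the paper's own proof, which simply tabulates the maximum matching size for all $128$ subsets $Y$ of the seven outputs and verifies the inequality case by case. Your approach replaces that exhaustive output-side enumeration by a much shorter input-side one (fourteen nontrivial subsets of the four inputs), and it explains \emph{why} the bound takes the form $1+\lceil(|Y|-1)/2\rceil$: it is forced by the extremal neighbourhood ratio $|S|\le 2|N(S)|-1$. The paper's table, on the other hand, records the \emph{exact} value of the maximum matching for every $Y$, which is more information than the lemma asserts; your method gives the stated lower bound but not those exact values. For the purposes of the lemma (and its use in Lemma~\ref{lem:stra_part1}) the lower bound is all that is needed, so your argument is both shorter and more portable to other encoders---one would just redo the $k$-input domination check for the new adjacency pattern.
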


The validity of this lemmas can be verified through an analysis of respectively, the recursive structure of Strassen's CDAG and the properties of  the encoding sub-CDAGs $Enc_A$ and $Enc_B$. We refer the reader to Appendix~\ref{app:proof_encoder} for detailed proofs.

\paragraph*{Strassen-like algorithms}
A $(n_0,m_0)$-Strassen-like algorithm has a recursive structure for
which in the ``\emph{base case}'' two $n_0\times n_0$ matrices are
multiplied using $m_0$ scalar multiplications, whose result are then
combined to obtain the product matrix.  Given input matrices of size
$n\times n$, the algorithm splits them into $n_0^2$ sub-matrices of
size $n/n_0 \times n/n_0$ and then proceeds block-wise, according to
the base case. Additions (resp., subtractions) in the base case are
interpreted as additions (resp., subtractions) of blocks and are
performed element-wise. Multiplications in the base case are
interpreted as multiplications of blocks and are executed by
recursively calling the algorithm. In our analysis we consider
Strassen-like algorithms for which each linear combination of the
input sub-matrices is used in only one multiplication.

Let $\stral{n\times n}$ be the CDAG corresponding to a $\left(n_0,
m_0\right)$-Strassen-like algorithm for input matrices of size
$n\times n$ . $\stral{n\times n}$ has a recursive structure analogous
to that of Strassen's CDAG $H^{n\times n}$. The base element
$\stral{n_0 \times n_0}$ of the CDAG construction corresponding to the
base case consists of two \emph{encoding graphs} (corresponding
respectively to the sub-CDAGs $Enc_A$ and $Enc_B$ for $H^{2\times
  2}$), which compute $m_0$ linear combinations of entries of
respectively the factor matrix $A$ and of $B$. Corresponding pairs are
then multiplied and the outputs are then combined using an
\emph{decoder} sub-CDAG to obtain the output.

Although in general the sub-problems generated by a Strassen-like
algorithm may not be input disjoint, in~\cite{scott2015matrix} Scott
et al. showed that a property similar to
Lemma~\ref{lem:disictness_sub_cdag} holds:

\begin{lemma}[Lemma 1~\cite{scott2015matrix}]\label{lem:vdigraphslike}
Let $\stral{n \times n}$ denote the CDAG of a
$(n_0,m_0)$-Strassen-like algorithm for input matrices of size $n
\times n$. For $0 \leq i \leq \log n-1$, there are at least
$m_0^{i-2}$ vertex-disjoint sub-CDAGs $\stral{n/n_0^i \times n/n_0^i}$
in $\stral{n\times n}$.
\end{lemma}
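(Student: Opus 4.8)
The plan is to prove, by induction on $i$, the (slightly strengthened) statement that for every $n$ with $i \le \log n - 1$ the CDAG $\stral{n \times n}$ contains at least $\max\{1,\, m_0^{\,i-2}\}$ vertex-disjoint copies of $\stral{n/n_0^i \times n/n_0^i}$. The hypothesis $i \le \log n - 1$ is used only to guarantee that the recursion unfolds to depth at least $i$, so that $\stral{n/n_0^i \times n/n_0^i}$ is a genuine sub-CDAG with the base-case encoder/decoder structure rather than a degenerate object. The base cases are $i \in \{0,1,2\}$, where $m_0^{\,i-2} \le 1$ and it suffices to exhibit a single such sub-CDAG: follow any root-to-depth-$i$ branch of the recursion tree of the algorithm and take the sub-CDAG that computes the corresponding nested $i$-fold block sub-product.

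For the inductive step, let $i \ge 3$ and recall the recursive assembly of $\stral{n \times n}$: two encoders produce $m_0$ encoded pairs of $(n/n_0)\times(n/n_0)$ sub-matrices, each pair is multiplied by a dedicated copy $G_1,\dots,G_{m_0}$ of $\stral{n/n_0 \times n/n_0}$, and a decoder recombines the output entries of the $G_k$ into the entries of $C$. Since $(n/n_0)/n_0^{\,i-1} = n/n_0^i$, the inductive hypothesis applied to each $G_k$ (at index $i-1 \ge 2$) yields at least $\max\{1,m_0^{\,i-3}\}=m_0^{\,i-3}$ vertex-disjoint copies of $\stral{n/n_0^i \times n/n_0^i}$ inside $G_k$. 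Collecting these over all $k$ gives $m_0\cdot m_0^{\,i-3}=m_0^{\,i-2}$ copies; the point left to check is that copies drawn from different $G_k$ are mutually vertex-disjoint. Each such copy sits at least one recursion level below its host $G_k$, so every one of its vertices — including its own input vertices, which are outputs of an encoder internal to $G_k$ — lies in the ``interior'' $V(G_k)\setminus I(G_k)$, where $I(G_k)$ is the set of vertices carrying the entries of the $k$-th encoded pair. The interiors of distinct $G_k$ are disjoint: their output vertices hold pairwise distinct block sub-products, and all their remaining internal vertices are created by distinct instances of the recursive sub-construction. Hence the $m_0^{\,i-2}$ copies are pairwise vertex-disjoint, closing the induction.

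The delicate point — and the reason the exponent is $i-2$ rather than the $i$ that Lemma~\ref{lem:disictness_sub_cdag} yields for Strassen's own algorithm — is precisely this disjointness bookkeeping: for a general Strassen-like algorithm the $m_0$ top-level sub-CDAGs $G_k$ need \emph{not} be vertex-disjoint, because a single linear combination of the input blocks may feed several of the $m_0$ multiplications, in which case the corresponding encoder-output vertices are shared among several $G_k$; moreover the same phenomenon may recur one level further down, inside each $G_k$. The two recursion levels ``spent'' in the exponent are exactly what make the argument immune to these shared boundary vertices: one descends far enough that the copies being counted lie strictly in the private interiors of the $G_k$. Making this fully rigorous requires a careful inspection of the recursive construction of $\stral{\cdot}$ to confirm that (i) encoder outputs are always fresh vertices, distinct from the encoder's inputs, and (ii) the output vertices and the internal vertices of the $m_0$ recursive calls at any given level are pairwise distinct across calls; this casework — otherwise routine — is where the bulk of the effort lies and is the main obstacle to a completely formal proof.
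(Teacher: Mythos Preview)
Your induction has a real gap at the step where you claim the copies furnished inside each $G_k$ lie in the interior $V(G_k)\setminus I(G_k)$. You justify this via your assumption (i), that encoder outputs are always fresh vertices distinct from the encoder inputs. But (i) is \emph{false} in the paper's CDAG model: when a sub-product uses a block directly (e.g., Strassen's $M_3$ uses $A_{1,1}$), the encoder-output vertex \emph{is} the encoder-input vertex --- the paper says so explicitly in the proof of Lemma~\ref{lem:conneconder}. Such pass-throughs can cascade through several recursion levels, so a sub-CDAG sitting at depth $\geq 2$ inside $G_k$ can still have input vertices that coincide with vertices of $I(G_k)$. Your inductive hypothesis, which only asserts the existence of $m_0^{i-3}$ vertex-disjoint copies inside $G_k$ without saying where they sit, gives you no control over this; the copies it hands you may touch $I(G_k)$, and then copies drawn from different $G_k$ can collide there. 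The induction as written does not close.

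The paper's argument is non-inductive and is designed around exactly this point. It first observes (from the Strassen-like hypothesis that each non-trivial linear combination feeds only one multiplication) that two same-depth sub-CDAGs can share a vertex only when some input of each is a sub-matrix of the global $A$ or $B$. Then, for each of the $m_0^{i-2}$ sub-problems $\mathcal{P}_1$ at depth $i-2$, it descends two further levels, each time choosing a child whose newly formed factor is a \emph{non-trivial} combination (one level to make the $A$-factor non-trivial, one to make the $B$-factor non-trivial); the resulting grandchild $\mathcal{P}_3$ at depth $i$ has both factors non-trivial and therefore shares no input vertex with any other $\mathcal{P}_3$. That two-level descent is where the ``$-2$'' in the exponent actually comes from. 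Your scheme can be repaired by strengthening the inductive hypothesis to ``there exist $\max\{1,m_0^{i-2}\}$ vertex-disjoint copies none of which contains an input vertex of the ambient CDAG'' --- but establishing the base case $i=2$ of that stronger statement is precisely the paper's two-level descent.
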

For completeness, we summarize their result in
Appendix~\ref{app:proof_encoder}.

\section{Lower bounds for schedules without recomputation}\label{sect:stranore}
In our presentation we denote as $G^{n \times n}$ the CDAG corresponding to the execution of an \emph{unspecified} algorithm which implements the square matrix multiplication function $f_{n\times n}:\ri^{2n^2}\rightarrow\ri^{n^2}$. 
Let $G^{q,n\times n}$ be a CDAG composed by $q$ vertex-disjoint CDAGs
$G^{n\times n}$. The set $I$ (resp., $O$) of input (resp., output)
vertices of $G^{q,n\times n}$ is given by the union of the sets of the
input (resp., output) vertices of the $q$ sub-CDAGs $G^{n\times
  n}$. According to this definition, the CDAG of Strassen $H^{n\times
  n}$ is a $G^{7^i,n/2^i\times n/2^i}$ CDAG for
$i\in{0,1,\ldots,\log_2 n -1 }$ (Lemma~\ref{lem:disictness_sub_cdag}). 

\begin{lemma}\label{lem:newbasenr}
Given $G^{q,n\times n}$, let $O'\in O$ be a subset of its the output
vertices. For any subset $\Gamma$ of the vertices of $G^{q,n\times n}$
with $|\Gamma|\leq 2|O'|$, the set $I'\subseteq I$ of the input
vertices which are not post-dominated by $\Gamma$ satisfies $|I'|\geq
2n\sqrt{|O'|-2|\Gamma|}$.
\end{lemma}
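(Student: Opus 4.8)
The plan is to localize the claim to a single copy of $G^{n\times n}$ by exploiting the vertex-disjointness of the $q$ copies composing $G^{q,n\times n}$, to bound inside each copy the number of input vertices that escape $\Gamma$ by means of the Grigoriev-flow machinery (Lemma~\ref{lem:flowpost} combined with Lemma~\ref{lem:info_flo_mat_mul}), and finally to recombine the per-copy estimates using subadditivity of $\sqrt{\cdot}$. Throughout, $I'$ is read as the set of input vertices $i\in I$ from which some directed path to $O'$ avoids $\Gamma$ (equivalently, $\Gamma$ is a post-dominator set for $I\setminus I'$ with respect to $O'$).

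Let $G^{n\times n}_1,\dots,G^{n\times n}_q$ be the $q$ vertex-disjoint copies, with vertex, input, and output sets $V_j$, $I_j$ (so $|I_j|=2n^2$), and $O_j$ (so $|O_j|=n^2$), and set $O'_j=O'\cap O_j$, $\Gamma_j=\Gamma\cap V_j$, $I'_j=I'\cap I_j$. Since the copies share no vertex, every directed path leaving an input vertex of $G^{n\times n}_j$ stays inside $V_j$ and can only reach outputs in $O_j$; hence an input vertex of $G^{n\times n}_j$ lies in $I'$ exactly when some path from it to $O'_j$ avoids $\Gamma_j$, i.e.\ $\Gamma_j$ is a post-dominator set for $I_j\setminus I'_j$ with respect to $O'_j$ in $G^{n\times n}_j$, a CDAG computing $f_{n\times n}$. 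Applying Lemma~\ref{lem:flowpost} to $G^{n\times n}_j$ with input subset $I_j\setminus I'_j$ (of size $2n^2-|I'_j|$) and output subset $O'_j$, and then Lemma~\ref{lem:info_flo_mat_mul} with $u=2n^2-|I'_j|$ (so $2n^2-u=|I'_j|$) and $v=|O'_j|$, both in the admissible ranges, gives $|\Gamma_j|\ge\frac{1}{2}\bigl(|O'_j|-|I'_j|^2/(4n^2)\bigr)$, hence $|I'_j|^2\ge 4n^2\bigl(|O'_j|-2|\Gamma_j|\bigr)$, and, writing $x^+=\max\{x,0\}$, $|I'_j|\ge 2n\sqrt{(|O'_j|-2|\Gamma_j|)^+}$.

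It remains to sum over $j$. As the $V_j$ partition $V$, we have $|\Gamma|=\sum_j|\Gamma_j|$, $|O'|=\sum_j|O'_j|$, and $|I'|=\sum_j|I'_j|$. Using $\sqrt{a}+\sqrt{b}\ge\sqrt{a+b}$ for $a,b\ge 0$ together with $\sum_j(|O'_j|-2|\Gamma_j|)^+\ge\bigl(\sum_j(|O'_j|-2|\Gamma_j|)\bigr)^+=\bigl(|O'|-2|\Gamma|\bigr)^+$, we conclude $|I'|=\sum_j|I'_j|\ge 2n\sum_j\sqrt{(|O'_j|-2|\Gamma_j|)^+}\ge 2n\sqrt{(|O'|-2|\Gamma|)^+}$, which is the asserted bound (trivially true when $|O'|\le 2|\Gamma|$). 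Once the three ingredients are assembled the argument is essentially bookkeeping; the single point requiring care is the per-copy truncation to the positive part, which does not obstruct the final aggregation because truncation only enlarges each summand and $\sum_j\sqrt{b_j}\ge\sqrt{\sum_j b_j}$ for nonnegative $b_j$.
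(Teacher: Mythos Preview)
Your proof is correct and follows essentially the same approach as the paper: partition into the $q$ vertex-disjoint copies, apply Lemma~\ref{lem:flowpost} with Lemma~\ref{lem:info_flo_mat_mul} in each copy to obtain $|I'_j|^2\ge 4n^2(|O'_j|-2|\Gamma_j|)$, and then recombine. Your aggregation via $\sum_j\sqrt{b_j}\ge\sqrt{\sum_j b_j}$ is equivalent to the paper's $(\sum_j|I'_j|)^2\ge\sum_j|I'_j|^2$, and you are somewhat more careful in handling the possibly negative terms via the positive-part truncation.
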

\begin{proof}
	Let $O'_j$ (resp., $\Gamma_j$) denote the subset of $O'$ (resp., $\Gamma$) which correspond to vertices in the $j$-th sub CDAG $G^{n\times n}_j$ for $j\in\{1,2,\ldots,q\}$. As, by hypothesis, the sub-CDAGs $G^{n\times n}_q$ are vertex disjoint, $O'_1,O'_2,\ldots,O'_q$ (resp., $\Gamma_1,\Gamma_2,\ldots,\Gamma_q$) constitute a partition of $O'$ (resp., $\Gamma$). 
	
	Let $I''_j\subseteq \mathcal{X}$ denote the set of input vertices of $G^{n\times n}_j$ for which $\Gamma_j$ is a post-dominator with respect to $O_j$. 
	From Lemma~\ref{lem:info_flo_mat_mul} and Lemma~\ref{lem:flowpost} the following condition must hold:
	\begin{equation}\label{eq:newlem1}
		|\Gamma_j|\geq w_{n\times n} \geq 
		\frac{1}{2}\left(|O'_j|-\frac{\left(2n^2 - |I''_j|\right)^2}{4n^2}\right).
	\end{equation}	
	Let $I'_j =I_j \setminus I''_j$ denote the set of input vertices of $G^{n\times n}_j$ for which $\Gamma_j$ is a post-dominator with respect to $O_j$. Since $|I|= 2n^2$, from (\ref{eq:newlem1}) we have $|I'_j|^2  \geq 4n^2\left(|O'_j|-2|\Gamma_j|\right)$.
	
	As the sub-CDAGs $G^{n\times n}_j$ are vertex disjoint, we can state that all the vertices in $I'=\cup_{j=1}^i I'_j$ are not post-dominated by $\Gamma$ and
	$
		|I'|^2 = \sum_{j=1}^q|I'_j|^2 \geq   4n^2\sum_{j=1}^q\left(|O'_j|-2|\Gamma_j|\right) = 4n^2\left(|O'|-2|\Gamma|\right)
	$.
\end{proof}

\begin{corollary}\label{cor:domnr}
	Given $G^{q,n\times n}$, the minimum size of a dominator set of any subset of the output vertices $O'\in O$ is at least $\lceil|O'|/2\rceil$.	
\end{corollary}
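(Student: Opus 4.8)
The plan is to read Corollary~\ref{cor:domnr} off Lemma~\ref{lem:newbasenr} with essentially no extra work. The one conceptual observation I would make first is that any dominator set $\Gamma$ for $O'$ is, by the very definitions involved, a post-dominator set for the \emph{entire} input set $I$ with respect to $O'$: the requirement ``every path from a vertex of $I$ to a vertex of $O'$ contains a vertex of $\Gamma$'' is literally both the dominator condition for $O'$ and the post-dominator condition for $I$ with respect to $O'$ (using that $I\cap O=\emptyset$). Consequently, the set $I'$ of input vertices \emph{not} post-dominated by such a $\Gamma$ is empty, i.e. $|I'|=0$.

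Next I would argue by contradiction. Suppose some dominator set $\Gamma$ for $O'$ satisfies $|\Gamma|<\lceil |O'|/2\rceil$; the case $|O'|=0$ is trivial, so assume $|O'|\geq 1$. Then $|\Gamma|\leq \lceil |O'|/2\rceil-1$, and checking the two parities of $|O'|$ gives $|O'|-2|\Gamma|\geq 1$; in particular $|\Gamma|<|O'|\leq 2|O'|$, so the side condition $|\Gamma|\leq 2|O'|$ needed to invoke Lemma~\ref{lem:newbasenr} holds. The lemma then gives $0=|I'|\geq 2n\sqrt{|O'|-2|\Gamma|}\geq 2n>0$ (recalling $n\geq 1$), a contradiction. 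Hence every dominator set for $O'$ has cardinality at least $\lceil |O'|/2\rceil$, and since a minimum dominator set is in particular a dominator set, the claim follows.

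I do not expect any genuine obstacle here — this is a direct corollary. The only points requiring a moment of care are (i) confirming that a dominator set for $O'$ leaves no input vertex un-post-dominated, which is what forces $|I'|=0$, and (ii) verifying the hypothesis $|\Gamma|\leq 2|O'|$ of Lemma~\ref{lem:newbasenr}; phrasing the argument as a contradiction starting from $|\Gamma|<\lceil |O'|/2\rceil$ handles (ii) cleanly and also sidesteps having to interpret the radicand $|O'|-2|\Gamma|$ when it would be negative.
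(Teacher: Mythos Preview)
Your proof is correct and is precisely the intended derivation: the paper states Corollary~\ref{cor:domnr} immediately after Lemma~\ref{lem:newbasenr} without proof, and your argument---observing that a dominator of $O'$ post-dominates all of $I$ with respect to $O'$, so $|I'|=0$, and then reading off the contradiction from the lemma's inequality---is exactly how the corollary follows. Nothing to add.
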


These results allows us to obtain a general theorem on the \io
complexity of matrix multiplication algorithms under the no-recomputation assumption.

\begin{theorem}[Lower bound \io complexity  matrix multiplication with no recomputation]
\label{thm:strassnr}
Let $G^{\alg}$  the CDAG corresponding to an algorithm $\alg$ which computes the product of two square matrices $A,B\in \ri^{n^2}$. Suppose that  $G^{\alg}$ has $q$ vertex disjoint sub-CDAGs $G^{2\sqrt{M}\times 2\sqrt{M}}$ each of which corresponds to input-disjoint sub-products of matrices of size $2\sqrt{M}\times 2\sqrt{M}$ generated by $\alg$. 
Assuming no intermediate result is ever computed more than once, the \io-complexity of $\alg$ when run on a sequential machine with a cache of size $M$ is:
\begin{equation}\label{eq:seqnewnr}
	IO_{G^{\alg}}(M) \geq qM.
\end{equation}
If run on $P$ processor each equipped with a local memory of size M the \io complexity is:
\begin{equation}\label{eq:parnewnr}
	IO_{G^{\alg}}(P,M) \geq qM/P.
\end{equation}
\end{theorem}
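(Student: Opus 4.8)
The plan is to combine a \emph{round decomposition} of the schedule with the dominator/flow machinery of Lemma~\ref{lem:newbasenr} and Corollary~\ref{cor:domnr}, applied to the $q$ vertex-disjoint sub-CDAGs. First I would fix an optimal schedule of $G^{\alg}$ and cut it into consecutive rounds, each comprising exactly $M$ I/O operations (the last one possibly fewer), so that $IO_{G^{\alg}}(M)\geq (L-1)M$, where $L$ is the number of rounds. Since no intermediate result is ever recomputed, every non-input vertex of $G^{\alg}$ is evaluated in exactly one round; moreover, for a round $R_t$ the set $D_t$ consisting of the at most $M$ values held in cache at the start of $R_t$ together with the at most $M$ values read from slow memory during $R_t$ is a dominator, of size at most $2M$, of the set $W_t$ of vertices evaluated during $R_t$ (any path from an input to a vertex of $W_t$ must first enter $W_t$ through a vertex whose value was either cached at the start of $R_t$ or read during $R_t$). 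With this in place, the theorem reduces to a counting statement: if $O$ denotes the $q\,(2\sqrt M)^2 = 4qM$ output vertices of the $q$ sub-CDAGs, and one can show $|O\cap W_t| = \mathcal{O}(M)$ for every round, then $L = \Omega(q)$ and, after accounting for constants, $IO_{G^{\alg}}(M)\geq qM$.

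The heart of the argument is therefore the per-round bound $|O\cap W_t| = \mathcal{O}(M)$. I would fix a round $R_t$ and partition $O\cap W_t$ as $O'_1,\dots,O'_q$ according to which sub-CDAG $G_j$ each output belongs to; this is a genuine partition because the $G_j$ are vertex-disjoint (Lemma~\ref{lem:disictness_sub_cdag}). For each $j$, let $\Gamma_j$ be the part of $D_t$ lying inside $G_j$. Since $D_t$ dominates $O'_j$ in $G^{\alg}$ and, by input-disjointness, every path reaching $O'_j$ enters $G_j$ through one of its own input vertices, the part of $D_t$ outside $G_j$ can be replaced by a set of input vertices of $G_j$ that it post-dominates with respect to $O'_j$; together with $\Gamma_j$ this is a post-dominator, inside $G_j$, for those inputs with respect to $O'_j$. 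Because $G_j$ computes $f_{2\sqrt M\times 2\sqrt M}$, Lemma~\ref{lem:info_flo_mat_mul} and Lemma~\ref{lem:newbasenr} (equivalently Corollary~\ref{cor:domnr}) then bound $|O'_j|$ in terms of $|\Gamma_j|$ and the number of input vertices of $G_j$ that the external part of $D_t$ blocks. Summing over $j$, using $\sum_j|\Gamma_j|\le |D_t|\le 2M$ together with the input-disjointness of the sub-products and the expansion property of the encoders (Lemma~\ref{lem:conneconder}) to bound, in aggregate, how many sub-CDAG input vertices a fixed set of external vertices can block, yields $\sum_j|O'_j| = \mathcal{O}(M)$.

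Once the per-round bound is established, the sequential statement~(\ref{eq:seqnewnr}) follows by the summation argument above. For the parallel statement~(\ref{eq:parnewnr}) I would run the same reasoning on the global evaluation of $G^{\alg}$ distributed over the $P$ processors: the $q$ sub-CDAGs remain vertex-disjoint, every memory-access step is now a point-to-point message (a send or a receive), and the identical round/dominator counting shows that the total number of words sent or received is at least $qM$; since this cost is borne collectively, at least one processor must send or receive at least $qM/P$ words.

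The step I expect to be the main obstacle is the per-round bound, and within it the treatment of the \emph{external} component of $D_t$ --- the dominator vertices lying outside the sub-CDAGs that influence $O\cap W_t$ only by feeding the sub-CDAG inputs through the encoders. The Grigoriev-flow lower bound of Lemma~\ref{lem:info_flo_mat_mul} is algorithm-independent and disposes of the internal part of each $D_t$ cleanly, but bounding how many sub-CDAG inputs a set of $\mathcal{O}(M)$ external vertices can jointly block genuinely uses the combinatorial structure of the encoders and the input-disjointness of the sub-products. It is here that the no-recomputation hypothesis pays off: it makes $O'_1,\dots,O'_q$ an exact partition of $O$ and thereby removes the need for the substantially more delicate external-dominator analysis that the recomputation-tolerant argument of Section~\ref{sec:stragen} must carry out.
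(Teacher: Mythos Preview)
Your round decomposition (partitioning the schedule by I/O count) is the standard Hong--Kung dual of the paper's approach, which instead partitions the schedule into $q$ segments during each of which exactly $4M$ sub-CDAG output vertices are first computed; either framing works, though the paper's yields the exact constant $qM$ while yours would give $(q-1)M$ because of the last partial round.

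The real problem is your second paragraph. You propose to control the ``external'' part of the dominator via Lemma~\ref{lem:conneconder}, but that lemma is about Strassen's specific encoder sub-CDAGs and is simply not available for a general algorithm~$\alg$, which is what Theorem~\ref{thm:strassnr} concerns. More to the point, the whole external analysis is unnecessary here. Under no recomputation, the paper just invokes Corollary~\ref{cor:domnr} directly: that corollary rests only on the Grigoriev flow of matrix multiplication (Lemmas~\ref{lem:info_flo_mat_mul} and~\ref{lem:flowpost}), hence is algorithm-independent, and it says that any dominator of a set $Y_j$ of $4M$ sub-CDAG outputs has size at least $2M$. If fewer than $M$ I/Os occurred in a segment, the set $\Gamma_j$ of at most $M$ cached plus at most $M-1$ loaded values would be a dominator of $Y_j$ of size $\leq 2M-1$, contradiction. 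Hence each of the $q$ non-overlapping segments contributes at least $M$ I/Os. No internal/external split, no encoder lemma, no per-sub-CDAG bookkeeping.

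Your final paragraph rightly senses that no-recomputation should make the external analysis disappear, but you misdiagnose the mechanism (the partition property you cite holds with or without recomputation, since the sub-CDAGs are vertex-disjoint), and your actual plan still carries the superfluous machinery. The genuine simplification is that, without recomputation, Corollary~\ref{cor:domnr} alone closes the argument; the encoder and vertex-disjoint-path lemmas (Lemmas~\ref{lem:conneconder}, \ref{lem:stra_part1}, \ref{lem:stra_part2}) enter only in Section~\ref{sec:stragen}, where recomputation is allowed and the dominator must be tracked across the boundary of the $H^{2\sqrt{M}\times 2\sqrt{M}}$ sub-CDAGs.
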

\begin{proof}
	As first observation, note that the $q$ vertex-disjoint sub-CDAGs $G^{2\sqrt{M}\times 2\sqrt{M}\times n}$ constitute a $G^{q,2\sqrt{M}\times 2\sqrt{M}}$ CDAG as previously defined. The set $X$ of input (resp. $Y$ of output) vertices of $G^{q,2\sqrt{M}\times 2\sqrt{M}}$ is composed by the union of the input (resp., output) vertices of the $q$ sub-CDAGs $G^{2\sqrt{M}\times 2\sqrt{M}}$. Clearly, $|X|=|Y|=q4M$.
	
	We start by proving (\ref{eq:seqnewnr}). Let $\mathcal{C}$ be any computational schedule for the sequential execution of the algorithm $\alg$ using a cache of size $M$ for which each intermediate value is computed just once. Each intermediate value must therefore be kept in memory (either cache or slow) until all operations using it as an operand have been computed. The operations corresponding to vertices in $G^{\alg}$ are executed according to a topological ordering of the vertices, that is, any value corresponding to a vertex in $u\in X$ which can be connected through a directed path to a vertex in $v\in Y$ will be computed in $\mathcal{C}$ before $v$. We partition $\mathcal{C}$ into $q$ segments such that during each segment $\mathcal{C}_j$ the values corresponding to exactly $4M$ vertices in $Y$ (denoted as $Y_j$) are computed for the first (and only) time.	
	We shall now verify that at least $M$ \io operations are to be executed during every segment $\mathcal{C}_j$. The proof is by contradiction.   
	 Let $\Gamma_j$ denote the set of vertices of $G^{\alg}$ which correspond to the at most $M$ values which are  stored in the cache at the beginning of the segment and the at most $M-1$ values which are loaded into the cache form the slow memory during $\mathcal{C}_j$ by means of a \emph{read} \io operation. We thus have $|\Gamma_j|\leq 2M -1$.
	  In order for the $4M$ values corresponding to vertices in $Y_j$ to be computed during the segment without any additional \io operation, there must be no path connecting any vertex in $Y_i$ to any vertex in $I$ which does not have at least one vertex in $\Gamma_j$ (i.e. $\Gamma_j$ has to be a \emph{dominator set} of $Y_j$). If any such path exists then there is a previously computed intermediate result $v$ which is required for the computation of at lest one of the values corresponding to a vertex in $Y_j$ which is neither residing in the cache at the beginning of $\mathcal{C}_j$, nor loaded from slow memory to the cache. As no intermediate result can be computed twice this would lead to a contradiction.  From Corollary~\ref{cor:domnr}, we have that any sub-set of $4M$ elements of $X$ has dominator size at least $2M$. This leads to a contradiction.
	
	At least $M$ \io operations are thus executed during each segment $\mathcal{C}_j$. Since, by construction, the $q$ segments are not overlapping, we can conclude that at least $qM$ \io operations are necessary for the execution of $\mathcal{C}$. This concludes the proof for lower bound in (\ref{eq:seqnewnr}).
	
	The proof for the lower bound for the parallel model in equation (\ref{eq:parnewnr}), follows a similar strategy. If $P$ processors are being used, at least one such processor $P^*$ must compute at least $|Y|/P = q4M/P$ values corresponding to vertices in $Y$. The bound follows by applying the same argument discussed for the sequential case to the computation executed by $P^*$.
	\end{proof}

This general result can be steadily applied to Strassen-like algorithms.
\begin{corollary}
Consider Strassen's algorithm being used to multiply two square
matrices $A,B\in\ri^{n\times n}$.  Assuming no intermediate result is
ever computed more than once, the I/O-complexity of the algorithm run
on a sequential machine with a cache of size $M$ is:
\begin{equation}\label{eq:stranr}
	IO_{H^{n\times n}}(M) \geq \frac{1}{7}  \left( \frac{n}{\sqrt{M}}\right)^{\log_2 7} M.
\end{equation}
If run on $P$ processor each equipped with local memory of size $M\leq n^2$ the \io complexity is:
\begin{equation}\label{eq:stranrpar}
	IO_{H^{n\times n}}(P,M) \geq \frac{1}{7}  \left( \frac{n}{\sqrt{M}}\right)^{\log_2 7} \frac{M}{P}.
\end{equation}
Additionally, the I/O-complexity of any $\left(n_0,m_0\right)$-Strassen-like algorithm run on a sequential machine with a cache of size $M$ is:
\begin{equation}\label{eq:stranrlike}
	IO_{\stral{n\times n}}(M) = \BOme{  \left( \frac{n}{\sqrt{M}}\right)^{\log_{n_0} m_0} M}.
\end{equation}
If run on $P$ processor each equipped with local memory of size $M\leq n^2$ the \io complexity is:
\begin{equation}\label{eq:stranrlikepar}
	IO_{\stral{n\times n}}(P,M) =\BOme{  \left( \frac{n}{\sqrt{M}}\right)^{\log_{n_0} m_0} \frac{M}{P}}.
\end{equation}
\end{corollary}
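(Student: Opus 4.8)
The plan is to obtain all four inequalities as direct instances of Theorem~\ref{thm:strassnr}; the only work specific to this corollary is choosing the recursion depth at which one extracts the required family of vertex-disjoint sub-products of size $2\sqrt{M}\times 2\sqrt{M}$.

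First, for Strassen's algorithm, I would apply Lemma~\ref{lem:disictness_sub_cdag}: for every integer $i$ with $0\leq i\leq \log_2 n-1$ the CDAG $H^{n\times n}$ contains $7^i$ pairwise vertex-disjoint copies of $H^{n/2^i\times n/2^i}$, and by the recursive construction each such copy is the CDAG of Strassen's multiplication of two $(n/2^i)\times (n/2^i)$ matrices, hence corresponds to one of the input-disjoint sub-products generated by the algorithm (vertex-disjointness trivially implies input-disjointness). Choosing $i$ so that $n/2^i=2\sqrt{M}$, i.e. $i=\log_2(n/(2\sqrt{M}))$ --- a legitimate choice when $n/(2\sqrt{M})$ is a power of two, the general case costing only a constant factor by taking $i=\lfloor\log_2(n/(2\sqrt M))\rfloor$ --- exhibits $H^{n\times n}$ as a $G^{q,2\sqrt{M}\times 2\sqrt{M}}$ CDAG with $q=7^i=(n/(2\sqrt{M}))^{\log_2 7}=(1/7)(n/\sqrt{M})^{\log_2 7}$, using $2^{\log_2 7}=7$. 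Substituting this value of $q$ into (\ref{eq:seqnewnr}) and (\ref{eq:parnewnr}) of Theorem~\ref{thm:strassnr} yields (\ref{eq:stranr}) and (\ref{eq:stranrpar}) respectively; the constraint $M\leq n^2$ in the parallel statement places us in the regime where at least one level of the recursion yields sub-problems of size $\Theta(\sqrt{M})$, so that the theorem applies non-vacuously.

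For a general $(n_0,m_0)$-Strassen-like algorithm the argument is identical, with Lemma~\ref{lem:vdigraphslike} replacing Lemma~\ref{lem:disictness_sub_cdag}: at recursion depth $i$ there are at least $m_0^{i-2}$ pairwise vertex-disjoint copies of $\stral{n/n_0^i\times n/n_0^i}$, and the standing assumption that every linear combination of input blocks feeds a single multiplication ensures these correspond to input-disjoint sub-products. Taking $i=\log_{n_0}(n/(2\sqrt{M}))$ gives $q=m_0^{i-2}=m_0^{-2}(n/(2\sqrt{M}))^{\log_{n_0} m_0}=\BOme{(n/\sqrt{M})^{\log_{n_0} m_0}}$, with the implied constant depending only on the fixed parameters $n_0,m_0$; feeding this $q$ into Theorem~\ref{thm:strassnr} produces (\ref{eq:stranrlike}) and (\ref{eq:stranrlikepar}).

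Everything else is bookkeeping, and I expect no genuinely new idea to be needed beyond Theorem~\ref{thm:strassnr} itself. The only points that require a little care are: (i) realizing $n/2^i$ (resp.\ $n/n_0^i$) as exactly $2\sqrt{M}$, or within a bounded factor of it, while keeping $i$ a nonnegative integer --- this is the source of the explicit constant $1/7$ in (\ref{eq:stranr}), namely $2^{-\log_2 7}$, which is attained when $n/(2\sqrt{M})$ is an exact power of two and degrades only by a constant otherwise; and (ii) checking that the vertex-disjoint sub-CDAGs furnished by Lemmas~\ref{lem:disictness_sub_cdag} and~\ref{lem:vdigraphslike} genuinely satisfy the ``input-disjoint sub-products of size $2\sqrt{M}\times 2\sqrt{M}$'' hypothesis of Theorem~\ref{thm:strassnr}. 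The mild mismatch between $2\sqrt{M}$ and the available sub-problem sizes (with its attendant loss of constant factors) is the only real nuisance.
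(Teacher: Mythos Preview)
Your proposal is correct and follows essentially the same approach as the paper: both apply Theorem~\ref{thm:strassnr} after using Lemma~\ref{lem:disictness_sub_cdag} (respectively Lemma~\ref{lem:vdigraphslike}) to extract $q=(n/2\sqrt{M})^{\log_2 7}$ (respectively $\Omega((n/\sqrt{M})^{\log_{n_0}m_0})$) vertex-disjoint sub-CDAGs of side $2\sqrt{M}$. The paper additionally disposes of the regime $n\leq 2\sqrt{M}$ by the trivial $3n^2$ input/output bound and simplifies the rounding issue by assuming $n$ and $\sqrt{M}$ are powers of two, but these are exactly the bookkeeping points you already flagged.
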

\begin{proof}
	We provide a simple proof for the sequential cases in (\ref{eq:stranr}) and (\ref{eq:stranrlike}). Let us assume without loss of generality that $n = 2^a$ and $\sqrt{M}=2^b$  for some $a,b\in \mathbb{N}$. At least $3M$ \io operations are necessary in order to read all the $2n^2$ input values form slow memory to the cache and to write the $n^2$ output values to the slow memory once they have been computed. The statement of the theorem is therefore trivially verified if $n\leq 2\sqrt{M}$. 
For $n\geq 2\sqrt{M}$, the result in (\ref{eq:stranr}) follows from applying Theorem~\ref{thm:strassnr} to the $H^{n\times n}$ CDAG which, from Lemma~\ref{lem:disictness_sub_cdag}, has $\left(n/2\sqrt{M}\right)^{\log_2 7}$ vertex-disjoint sub CDAGs $H^{2\sqrt{M} \times 2\sqrt{M}}$. 
The result in (\ref{eq:stranrlike}) similarly follows from applying Lemma~\ref{lem:vdigraphslike}. The results for the parallel model in (\ref{eq:stranrpar}) and (\ref{eq:stranrlikepar}) can be obtained using a generalization similar to the one described in Theorem~\ref{thm:strassnr}.
\end{proof}
While our lower bounds correspond asymptotically to the known bounds
in~\cite{scott2015matrix}, our technique yields a simpler analysis,
especially for the \io complexity of Strassen-like algorithms. Our
proof technique is based on the analysis of the recursive structure of
Strassen-like algorithms and on the identification of the
vertex-disjoint sub-CDAGs corresponding to the various
sub-problems. The fact that Theorem~\ref{thm:strassnr} applies for
\emph{any} square matrix multiplication algorithm, allows us to obtain
significant bounds without a detailed analysis of the CDAG
corresponding to the \emph{specific} algorithm being considered.

This property further suggests that our technique may be amenable to
deal with hybrid ``\emph{non-stationary}'' multiplication algorithms, which allow mixing of schemes of the previous Strassen-like class in different recursive levels (for the ``\emph{uniform}'' sub-class)~\cite{douglas1994gemmw}, or even within the same level (for the ``\emph{non-uniform}'' sub-class).

\section{Lower bounds for schedules with recomputation}\label{sec:stragen}
Under no-recomputation assumption once an input value is loaded in memory or an intermediate result is calculated it is then necessary to maintain it in memory (either cache or slow) until the result of each operation which uses it as an input argument has been evaluated. This constitutes the foundation of the lower bound technique discussed in Theorem~\ref{thm:strassnr} as well as several other techniques discussed in literature (among others, the ``\emph{dichotomy width} technique''~\cite{bilardi1999processor}, the ``\emph{boundary flow} technique''~\cite{ranjan2012upper}), including those yielding \io lower bound for Strassen's algorithm~\cite{ballard2012graphrec,scott2015matrix}. If recomputation is allowed, intermediate results can instead be deleted from \emph{all} memory and recomputed. This introduces a substantial complication in the theoretical analysis of the \io cost (see~\cite{ballard2011minimizing} for an extensive discussion). \\

In this section we present a new lower bound technique which yields a novel asymptotically tight \io lower bound for Strassen's algorithm both in the sequential and parallel model.
We start by presenting some technical lemmas which  will then be used for the proof of our main result in Theorem~\ref{thm:genstrass}. 

\begin{lemma}\label{lem:stra_part1}
Let $H^{n\times n}$ be the CDAG which corresponds to the execution of Strassen's matrix multiplication algorithm for input matrices $A,B\in \ri^{n\times n}$, with $n\geq 2\sqrt{M}$. Let $\mathcal{Y}$ (resp., $\mathcal{Z}$) denote the set of input (resp., output) vertices of the $\left(n/2\sqrt{M}\right)^{\log_2 7}$ sub-CDAGs $H^{2\sqrt{M}\times 2\sqrt{M}}$. Furthermore, let $\Gamma$ be a subset of the set of \emph{internal} (i.e., not input) vertices of the sub-CDAGs $H^{2\sqrt{M}\times 2\sqrt{M}}$. Let $\mathcal{X}$ denote the set on input vertices of $H^{n\times n}$. For any subset $Z\subseteq \mathcal{Z}$ such that $|Z|\geq 2|\Gamma|$ there exists a set $X\subset\mathcal{X}$ with $|X|\geq 4\sqrt{M\left(|Z|-2|\Gamma|\right)}$ such that  vertices in $X$ can be connected to vertices in a subset $Y\in \mathcal{Y}$, with $|X|=|Y|$ via vertex disjoint paths. Furthermore all vertices in $Y$ can be connected to a vertex in $Z$ by a directed path which does not include any vertex in $\Gamma$.
\end{lemma}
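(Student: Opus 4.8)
\noindent\emph{Proof plan.} I would prove the statement by induction on the number $\ell=\log_2\!\bigl(n/(2\sqrt M)\bigr)$ of recursion levels separating $H^{n\times n}$ from its $H^{2\sqrt M\times 2\sqrt M}$ sub-CDAGs, carrying through the induction the whole conclusion: a family of $|X|$ vertex-disjoint paths, the $i$-th from $x_i\in X$ to a distinct $y_i\in Y$, with $|X|=|Y|\geq 4\sqrt{M(|Z|-2|\Gamma|)}$, and with every $y\in Y$ reaching some vertex of $Z$ by a path avoiding $\Gamma$. Two ingredients do the work: Grigoriev's flow (via Lemma~\ref{lem:flowpost} and Lemma~\ref{lem:info_flo_mat_mul}), which is invoked only at the bottom of the recursion, and Lemma~\ref{lem:conneconder}, which lifts the statement up one encoder layer at a time. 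Since $\Gamma$ lies entirely inside the leaf sub-CDAGs, any path that stays inside the encoder/decoder portion of $H^{n\times n}$ automatically avoids $\Gamma$, so the ``$\Gamma$-avoiding reach'' clause need only be produced at the base case and then transported unchanged. In the base case $\ell=0$ the CDAG is a single $H^{2\sqrt M\times 2\sqrt M}$, $\mathcal X=\mathcal Y$ is its set of $8M$ inputs, and one takes $Y_0$ to be the inputs not post-dominated by $\Gamma$ with respect to $Z$; then $\Gamma$ post-dominates $\mathcal Y\setminus Y_0$ with respect to $Z$, so Lemma~\ref{lem:flowpost} and Lemma~\ref{lem:info_flo_mat_mul} (with $n\to 2\sqrt M$, as in the proof of Lemma~\ref{lem:newbasenr}) give $|\Gamma|\geq\frac12\bigl(|Z|-|Y_0|^2/(16M)\bigr)$, hence $|Y_0|\geq 4\sqrt{M(|Z|-2|\Gamma|)}$, and $X=Y=Y_0$ with length-zero paths settles it.

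For the inductive step, decompose $H^{n\times n}$ into seven vertex-disjoint sub-CDAGs $H^{n/2\times n/2}_1,\dots,H^{n/2\times n/2}_7$, joined below by one layer of $Enc_A$ and $Enc_B$ copies feeding their inputs and above by a layer of $Dec$ copies. The leaf sub-CDAGs of $H^{n\times n}$ are the union of those of the seven children, so $\Gamma$, $Z$, $\mathcal Y$, $\mathcal Z$ partition into $\Gamma_k$, $Z_k$, $\mathcal Y^{(k)}$, $\mathcal Z^{(k)}$. For each child with $|Z_k|\geq 2|\Gamma_k|$ apply the inductive hypothesis to get $X_k$ among the inputs of $H^{n/2\times n/2}_k$, $Y_k\subseteq\mathcal Y^{(k)}$ with $|X_k|=|Y_k|\geq 4\sqrt{M a_k}$ where $a_k=\max\{0,|Z_k|-2|\Gamma_k|\}$, a disjoint-path bijection $X_k\leftrightarrow Y_k$, and a $\Gamma_k$-avoiding path from each vertex of $Y_k$ to $Z_k$; for the other children put $X_k=Y_k=\emptyset$. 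A path internal to a child from one of its inputs to one of its outputs stays inside that child, so ``$\Gamma_k$-avoiding'' there coincides with ``$\Gamma$-avoiding''. To pull $\bigcup_k X_k$ back one level, note that each $A$-input of $H^{n/2\times n/2}_k$ is one designated output --- the one routed to the $k$-th child --- of a unique $Enc_A$ copy of the joining layer, so the $A$-part $X_k^A$ is carried by the outputs of those copies. For each $Enc_A$ copy $c$ having some output in $\bigcup_k X_k^A$, apply Lemma~\ref{lem:conneconder} to those outputs to match a subset $X_c'$ of its four input vertices onto distinct such outputs, with $|X_c'|\geq\lceil(w_c+1)/2\rceil$ where $w_c$ is their number; proceed identically on the $B$ side. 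Concatenating these single-edge matchings with the inductive-hypothesis paths inside the children yields vertex-disjoint paths from $X=\bigcup_c X_c'\subseteq\mathcal X$ to a set $Y\subseteq\bigcup_k Y_k$ with $|X|=|Y|$ --- disjointness holding because distinct encoder copies are vertex-disjoint, distinct children are vertex-disjoint, and the junctions are encoder outputs each used once --- and each vertex of $Y$ still reaches $Z$ by a $\Gamma$-avoiding path.

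It remains to bound $|X|$. Summing $|X_c'|\geq\frac12(w_c+1)$ over the nonempty $Enc_A$ copies, with $\sum_c w_c=\sum_k|X_k^A|$ and the observation that each fixed $k$ forces at least $|X_k^A|$ distinct nonempty copies (again from the bijection between a child's $A$-inputs and the $Enc_A$ copies), gives $|X^A|\geq\frac12\sum_k|X_k^A|+\frac12\max_k|X_k^A|$, and likewise $|X^B|\geq\frac12\sum_k|X_k^B|+\frac12\max_k|X_k^B|$; adding and using $\max_k|X_k^A|+\max_k|X_k^B|\geq\max_k|X_k|$ gives
\[
|X|\ \geq\ \frac12\sum_{k=1}^{7}|X_k|+\frac12\max_k|X_k|\ \geq\ 2\sqrt M\Bigl(\sum_{k=1}^{7}\sqrt{a_k}+\max_k\sqrt{a_k}\Bigr).
\]
One then closes with the elementary inequality $\sum_{k=1}^{7}\sqrt{a_k}+\max_k\sqrt{a_k}\geq 2\sqrt{\sum_{k}a_k}$, valid for all $a_k\geq 0$ (setting $b_k=\sqrt{a_k}$ with $b_1=\max_k b_k$ and squaring reduces it to $4b_1\sum_{k\geq 2}b_k+2\sum_{2\leq i<j}b_ib_j\geq 3\sum_{k\geq 2}b_k^2$, immediate from $b_1 b_k\geq b_k^2$), together with $\sum_k a_k\geq|Z|-2|\Gamma|$, obtaining $|X|\geq 4\sqrt{M(|Z|-2|\Gamma|)}$.

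The main obstacle is precisely this final accounting. Lemma~\ref{lem:conneconder} may lose almost a factor of two at a single encoder copy, so a careless level-by-level pull-back would accrue a spurious $2^{\ell}$ factor and fall far short of the target. The induction survives because the two adverse regimes are mutually exclusive: when only a few children contribute, the selected outputs are essentially ``monochromatic'' at every copy ($w_c\leq 1$) and Lemma~\ref{lem:conneconder} loses nothing; when many children contribute, the gap between $\sum_k\sqrt{a_k}$ and $\sqrt{\sum_k a_k}$ is ample to absorb the loss --- and both phenomena are captured at once by the extra $\frac12\max_k|X_k|$ term in the pull-back bound and by the inequality $\sum_k\sqrt{a_k}+\max_k\sqrt{a_k}\geq 2\sqrt{\sum_k a_k}$. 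Securing the exact constant $4$ (rather than merely $\Omega(\sqrt{M(|Z|-2|\Gamma|)})$) is what makes the right form of these two facts --- and the appearance of the $\max_k$ term, which rests on the bijection between a child's $A$-inputs and the $Enc_A$ copies --- the delicate part; the $\Gamma$/$Z$ partition, the disjointness checks, and the transport of the ``$\Gamma$-avoiding'' clause are routine once the recursive frame is in place.
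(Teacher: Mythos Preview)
Your proposal is correct and follows essentially the same route as the paper: induction on the recursion depth, base case handled via Grigoriev's flow exactly as in Lemma~\ref{lem:newbasenr}, and inductive step by applying Lemma~\ref{lem:conneconder} encoder-by-encoder to pull the child-level sets $X_k$ back one layer. The only difference is cosmetic, in the encoder accounting: the paper assigns the index of the child with the largest $\delta_i$ the role of ``index~1'' and bounds each encoder's contribution by $\mathbf{y}_j[1]+\tfrac12\sum_{i\ge2}\mathbf{y}_j[i]$, whereas you use $\lceil(w_c+1)/2\rceil\ge\tfrac12(w_c+1)$ together with a lower bound on the number of nonempty encoder copies (your $A/B$ split is in fact unnecessary, since the $|X_k|$ inputs of child~$k$ already occupy $|X_k|$ distinct encoder copies); both arrive at $|X|\ge |K_{\max}|+\tfrac12\sum_{\text{rest}}|K_i|$ and close with the identical squared inequality $(\sqrt{\delta_{\max}}+\tfrac12\sum_{\text{rest}}\sqrt{\delta_i})^2\ge\sum_i\delta_i$.
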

\begin{proof}
	The proof is by induction on the size of the input matrices.\\
\emph{Base:} In the base case we have $n = 2\sqrt{M}$. We therefore
have $H^{n\times n} = H^{2\sqrt{M}\times 2\sqrt{M}}$ and the sets
$\mathcal{Y}$ and $\mathcal{X}$ coincide. We can verify that the
statement holds by simply applying Lemma~\ref{lem:newbasenr} as
$H^{2\sqrt{M}\times 2\sqrt{M}}$ corresponds to a $G^{1,2\sqrt{M}\times
  2\sqrt{M}}$ CDAG.\\
\emph{Inductive step:} Let us now assume that the statement is
verified for $H^{n\times n}$, with $n\geq 2\sqrt{M}$. We shall show
that the statement is verified for $H^{2n\times 2n}$ as well.  Let
$H^{n\times n}_1,H^{n\times n}_2,\ldots,H^{n\times n}_7$ denote the
seven sub-CDAGs of $H^{2n\times 2n}$, each corresponding to one of the
seven sub-products generated by the first recursive step of Strassen's
algorithm. Let $Z_i$ (resp., $\mathcal{Y}_i$) denote the subset of $Z$
(resp., $\mathcal{Y}$) which correspond to vertices in $H^{n\times
  n}_i$. As, from Lemma~\ref{lem:disictness_sub_cdag}, the seven
sub-CDAGs $H^{n\times n}_i$ are vertex disjoint among themselves,
$Z_1,Z_2,\ldots,Z_7$ (resp., $\mathcal{Y}_1,\mathcal{Y}_2,\ldots,
\mathcal{Y}_7$) constitute a partition of $Z$ (resp., $\mathcal{Y}$).

For $i\in \{1,2,\ldots,7\}$, let $\Gamma_i$ be the subset of vertices in $\Gamma$ in the sub-CDAGs $H^{n\times n}_i$. Again, as the seven sub-CDAGs $H^{n\times n}_i$ are vertex disjoint among themselves, we have that $\Gamma_1,\Gamma_2,\ldots,\Gamma_7$ is a partition of $\Gamma$.  This implies $\sum_{i=1}^7 |\Gamma_i| = |\Gamma|$. Let $\delta_i = \max\{0, |Z_i|-2|\Gamma_i|\}$, we have $\delta = \sum_{i=1}^7 \delta_i \geq |Z| - 2|\Gamma|$. 
	
	Applying the inductive hypothesis to each sub-CDAG $H^{n\times n}_i$, we have that there is a subset $Y_i\subseteq \mathcal{Y}_i$ with $|Y_i|\geq 4\sqrt{M\delta_i}$ such that vertices of $Y_i$ are connected  to vertices in $Z_i$ via paths which do not include any vertex in $\Gamma_i$. Furthermore vertices in $Y_i$ can be connected to a subset $K_i$ of the input vertices of $H^{n\times n}_i$ with $|K_i|=|Y_i|$, using vertex-disjoint paths. Since the sub-CDAGs $H^{n\times n}_i$ are vertex disjoint, so are the paths connecting vertices in $Y_i$ to vertices in $K_i$. In order to conclude our proof we need to show that is possible to extend at least $4\sqrt{M\left(|Z|-2|\Gamma\|\right)}$ of these paths to vertices in $\mathcal{X}$ while still being vertex disjoint. 
	
	According to the construction of Strassen's CDAG, vertices in $\mathcal{X}$  are connected to vertices in $K_1,K_2,\ldots,K_7$ by means of $2n^2$ encoding sub-CDAGs $Enc_A$ and $Enc_B$ ($n^2$ of each). For the construction of $H^{2n \times 2n}$, none of these encoding sub-CDAGs shares any input or output vertices. For any given encoder sub-CDAGs each of its output vertices belongs to a different sub-CDAG $H^{n\times n}_i$. This fact ensures that for a single sub-CDAG $H^{n\times n}_i$ it is possible to connect all the vertices in $K_i$ to a subset of the vertices in $\mathcal{X}$ via vertex disjoint paths.
	
	For each of the $2n^2$ encoder sub-CDAGs, let us consider the vector $\mathbf{y}_j\in\{0,1\}^7$. We have that $\mathbf{y}_j[i] = 1$ if the corresponding $i$-th output vertex (respectively according to the numbering indicated in Figure~\ref{fig:enca} or Figure~\ref{fig:encb}) is in $K_i$ or $\mathbf{y}_j[i] = 0$ otherwise. We therefore have that $|\mathbf{y}_j|$ corresponds to the number of output vertices of the $j$-th encoder sub-CDAG which are in $K$.
	 From Lemma~\ref{lem:conneconder}, we have that for each encoder sub-CDAG there exists a subset $X_j\in\mathcal{X}$ of the input vertices of the $j$-th encoder sub-CDAG for which is possible to connect each vertex in $X_j$ to a distinct output vertex of the $j$-th encoder sub-CDAG using vertex disjoint paths, each constituted by a singular edge with $\min\{|\mathbf{y}_j|, 1 +\lceil\left(|\mathbf{y}_j|-1\right)/2\rceil\}\leq |X_j|\leq |\mathbf{y}_j|$. The number of vertex disjoint paths connecting vertices in $\mathcal{X}$, to vertices in $\cup_{i=1}^7 K_i$ is therefore at least $\sum_{j=1}^{2n^2} \min\{|\mathbf{y}_j|, 1 +\lceil\left(|\mathbf{y}_j|-1\right)/2\rceil\}$, under the constraint that $\sum_{j=1}^{2n^2} \mathbf{y}_j[i]= 4\sqrt{M\delta^i}$.
		Without loss of generality, let us assume that $\delta_1 \geq \delta_2\geq \ldots\geq \delta_7$. As previously stated, it is possible to connect all vertices in $K_1$ to vertices in $\mathcal{X}$ through vertex disjoint paths. Consider now all possible dispositions of the vertices in $\cup_{i=2}^7 K_i$ over the outputs of the $2n^2$ encoder sub-CDAGs. 
		Recall that the output vertices of an encoder sub-CDAG belong each to a different $H^{n\times n}$ sub-CDAG. From Lemma~\ref{lem:conneconder}, we have that for each encoder, there exists a subset $X_j\subset{X}$ of the input vertices of the $j$-th encoder sub-CDAG, with
		$
			|X_j|\geq \min \Big\{|\mathbf{y}_j|, 1 + \left\lceil\left(|\mathbf{y}_j|-1\right)/2\right\rceil \Big\} \geq \mathbf{y}_j[1] + \left(\sum_{i=2}^7 \mathbf{y}_j[i]\right)/2
		$, 		for which is possible to connect all vertices in $X_j$ to $|X_j|$ \emph{distinct} output vertices of the $j$-th encoder sub-CDAG which are in $\cup_{i=1}^7 K_i$ using $|X_j|$, thus using vertex disjoint paths.
		As all the $Enc$ sub-CDAGs are vertex disjoint, we can sum their contributions and we can therefore conclude that the number of vertex disjoint paths connecting values in $\mathcal{X}$ to vertices in  $\cup_{i=1}^7 K_i$ is at least $|K_1|+ \frac{1}{2}\sum\limits_{i=2}^7|K_i| = 4\sqrt{M}\left(\sqrt{\delta_1} + \frac{1}{2}\sum\limits_{i=2}^7\sqrt{\delta_i} \right)$.
	Squaring this quantity leads to:
	\begin{equation*}
		\left(4\sqrt{M}\left(\sqrt{\delta_1} +\frac{1}{2}\sum\limits_{i=2}^7\sqrt{\delta_i} \right)\right)^2 = 16M\left(\delta_1 +\sqrt{\delta_1}\sum\limits_{i=2}^7\sqrt{\delta_i} + \left(\frac{1}{2}\sum\limits_{i=2}^7\sqrt{\delta_i}\right)^2\right).
	\end{equation*}
	As, by assumption, $\delta_{1} \geq \delta_2\geq\ldots \delta_7$, we have: $\sqrt{\delta_1}\sqrt{\delta_i}\geq \delta_i$ for $i=2,3,\ldots,7$. Thus:
	\begin{equation*}
		\left(4\sqrt{M}\left(\sqrt{\delta_1} +\frac{1}{2}\sum\limits_{i=2}^7\sqrt{\delta_i} \right)\right)^2 
		\geq 16 M \sum\limits_{i=1}^7 \delta_i \geq \left(4\sqrt{M\left(|Z|-2|\Gamma|\right)}\right)^2.
	\end{equation*}
	
	There are therefore at least $4\sqrt{M\left(|Z|-2|\Gamma|\right)}$ vertex disjoint paths connecting vertices in $\mathcal{X}$ to vertices in $\cup_{i=2}^7 K_i$. The lemma follows.
\end{proof}

\begin{lemma}\label{lem:stra_part2}
Let $H^{n\times n}$ be the CDAG which corresponds to the execution of Strassen's matrix multiplication algorithm for input matrices $A,B \in \ri^{n^2}$, with $n\geq 2\sqrt{M}$. Let $\mathcal{Y}$ (resp., $\mathcal{Z}$) denote the set of input (resp., output) vertices of the $\left(n/2\sqrt{M}\right)^{\log_2 7}$ sub-CDAGs $H^{2\sqrt{M}\times 2\sqrt{M}}$. Any dominator set for any subset $Z\subseteq \mathcal{Z}$, has size at least $|Z|/2 = 2M$.
\end{lemma}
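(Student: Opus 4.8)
The plan is to split a dominator set $\Delta$ of $Z$ into the part $\Gamma$ lying inside the bottom sub-CDAGs $H^{2\sqrt{M}\times 2\sqrt{M}}$ and the part $\Lambda$ lying outside them, to bound $|\Lambda|$ from below by feeding $\Gamma$ to Lemma~\ref{lem:stra_part1}, and finally to optimise the resulting estimate over $|\Gamma|$.

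First I would note that we may assume every vertex of $\Delta$ lies on some directed path from an input vertex of $H^{n\times n}$ to a vertex of $Z$, since deleting the other vertices keeps $\Delta$ a dominator of $Z$ and only decreases $|\Delta|$. By the recursive construction of Strassen's CDAG, each such vertex belongs to exactly one of three classes: (a)~the internal (non-input) vertices of the bottom sub-CDAGs, whose union I call $V_{\mathrm{int}}$ (so $\mathcal{Z}\subseteq V_{\mathrm{int}}$); (b)~the input vertices $\mathcal{Y}$ of the bottom sub-CDAGs; (c)~the input vertices $\mathcal{X}$ of $H^{n\times n}$ together with the vertices of the encoder cascade joining them to $\mathcal{Y}$. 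I would then set $\Gamma:=\Delta\cap V_{\mathrm{int}}$ and $\Lambda:=\Delta\setminus\Gamma$, so $\Lambda\cap V_{\mathrm{int}}=\emptyset$. When $n=2\sqrt{M}$ there is a single bottom sub-CDAG equal to $H^{n\times n}$ itself and the bound is immediate from Corollary~\ref{cor:domnr}; so assume $n>2\sqrt{M}$, whence no vertex of $\Lambda$ lies inside any bottom sub-CDAG.

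If $|Z|<2|\Gamma|$ the claim is immediate, since $|\Delta|\ge|\Gamma|>|Z|/2$. Otherwise $|Z|\ge 2|\Gamma|$, and, $\Gamma$ being a set of internal vertices of the bottom sub-CDAGs, Lemma~\ref{lem:stra_part1} supplies a set $X\subseteq\mathcal{X}$ with $|X|\ge 4\sqrt{M(|Z|-2|\Gamma|)}$, a set $Y\subseteq\mathcal{Y}$ with $|Y|=|X|$, pairwise vertex-disjoint directed paths joining each $x\in X$ to a distinct $y(x)\in Y$, and, for every $y\in Y$, a directed path from $y$ to some $z\in Z$ avoiding $\Gamma$. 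For each $x$ I would form $\pi_x$ by concatenating $x\to\cdots\to y(x)$ with the chosen $\Gamma$-avoiding path $y(x)\to\cdots\to z$; since the first segment lies outside the bottom sub-CDAGs except for its endpoint $y(x)$ and the second lies inside a single bottom sub-CDAG, the two meet only at $y(x)$ and $\pi_x$ is a directed path from an input vertex of $H^{n\times n}$ to $Z$. As $\Delta$ dominates $Z$ it meets $\pi_x$; it cannot meet it in $\Gamma$, since $\Gamma\subseteq V_{\mathrm{int}}$ forces any $\Gamma$-vertex of $\pi_x$ onto the second segment, which avoids $\Gamma$; hence it meets $\pi_x$ in $\Lambda$. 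A $\Lambda$-vertex on the second segment would be an input vertex of that sub-CDAG, i.e.\ the starting vertex $y(x)$, which also lies on the first segment; so $\Lambda$ meets the \emph{first} segment of every $\pi_x$. These first segments are pairwise vertex-disjoint, so $|\Lambda|\ge|X|\ge 4\sqrt{M(|Z|-2|\Gamma|)}$.

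Combining, $|\Delta|=|\Gamma|+|\Lambda|\ge\varphi(|\Gamma|)$ with $\varphi(t):=t+4\sqrt{M(|Z|-2t)}$ on $[0,|Z|/2]$; inspecting $\varphi'(t)=1-4\sqrt{M}/\sqrt{|Z|-2t}$ shows $\varphi$ attains its minimum at an endpoint, and since $\varphi(|Z|/2)=|Z|/2$ and $\varphi(0)=4\sqrt{M|Z|}\ge|Z|/2$ whenever $|Z|\le 64M$, this yields $|\Delta|\ge|Z|/2$, hence $|\Delta|\ge 2M$ in the case $|Z|=4M$ of interest. The step I expect to be the main obstacle is the localisation in the third paragraph, namely forcing each of the $|X|$ vertex-disjoint paths from $X$ to $Y$ to be charged to $\Lambda$ rather than to $\Gamma$; this hinges on the $\Gamma$-avoidance clause of Lemma~\ref{lem:stra_part1} and on the vertex-disjointness of the bottom sub-CDAGs (Lemma~\ref{lem:disictness_sub_cdag}), while the partition bookkeeping and the minimisation of $\varphi$ are routine.
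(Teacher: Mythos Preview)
Your proof is correct and follows essentially the same approach as the paper: both split the dominator into its internal part (inside the bottom sub-CDAGs) and its external part, feed the internal part to Lemma~\ref{lem:stra_part1} to obtain $4\sqrt{M(|Z|-2|\Gamma|)}$ vertex-disjoint paths from $\mathcal{X}$ to $\mathcal{Y}$ that extend to $Z$ while avoiding the internal dominator vertices, and then argue that each external dominator vertex can block at most one of those paths. The only cosmetic difference is in the endgame: the paper argues by contradiction and uses a chain of algebraic inequalities to show that with $|Z|=4M$ and $|\Gamma|\le 2M-1$ at least one path survives, whereas you minimise $\varphi(t)=t+4\sqrt{M(|Z|-2t)}$ directly by concavity---arguably a cleaner bookkeeping of the same estimate.
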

\begin{proof}
	The proof is by contradiction. Let $\Gamma$ be a dominator set for $Z$ in $H^{n\times n}$ such that $|\Gamma|\leq 2M-1$. 
	Let $\Gamma'\subseteq \Gamma$ be the subset of ``internal'' vertices of any of any of the sub-CDAGs $H^{2\sqrt{M}\times 2\sqrt{M}}$ in $\Gamma$.	Let $\mathcal{X}$ denote the set the input vertices of $H^{n\times n}$. From Lemma~\ref{lem:stra_part1} we have that there exist at least $4\sqrt{M\left(|Z|-2|\Gamma'|\right)}$  paths connecting a subset $X \subseteq \mathcal{X}$ of global input vertices to vertices in $Z$ which do not include any vertex in $\Gamma'$. Further, the sub-paths of such paths that connect vertices in $X$ to vertices in $Y\subseteq \mathcal{Y}$ are vertex disjoint among themselves. This implies that each of the vertices in $\Gamma\setminus\Gamma'$  can therefore be traversed by at most one of these paths. The number of directed paths from $X$ to $Z$ which do not traverse any vertex in $\Gamma$ will therefore be at least $4\sqrt{M\left(|Z|-2|\Gamma'|\right)} - \left(|\Gamma| - |\Gamma'|\right)$. We have:
	\begin{align*}
		& \left(4\sqrt{M\left(|Z|-2|\Gamma'|\right)} - \left(|\Gamma| - |\Gamma'|\right)\right)^2 \\
		&\ \ \ \ \ \ \ \ \ \ \ = 16 M \left(|Z|-2|\Gamma'|\right) + \left(|\Gamma| - |\Gamma'|\right)^2 - 8\sqrt{M\left(|Z|-2|\Gamma'|\right)}\left(|\Gamma| - |\Gamma'|\right) \\
		&\ \ \ \ \ \ \ \ \ \ \ \geq 16 M \left(|Z|-2|\Gamma'|\right) - 8\sqrt{M\left(|Z|-2|\Gamma'|\right)\left(|\Gamma| - |\Gamma'|\right)} \\
		&\ \ \ \ \ \ \ \ \ \ \ \geq 16 M \left(|Z|-2|\Gamma|\right) + 32 M \left(|\Gamma| - |\Gamma'|\right) - 8\sqrt{M\left(|Z|-2|\Gamma'|\right)}\left(|\Gamma| - |\Gamma'|\right) \\
		&\ \ \ \ \ \ \ \ \ \ \ \geq 16 M \left(|Z|-2|\Gamma|\right) +\left(|\Gamma| - |\Gamma'|\right) \left(32M - 8\sqrt{M\left(|Z|-2|\Gamma'|\right)}\right)
	\end{align*}
Since by hypothesis $|Z|-2|\Gamma'|\leq 4M$, we have $	\left(4\sqrt{M\left(|Z|-2|\Gamma'|\right)} - \left(|\Gamma| - |\Gamma'|\right)\right)^2 \geq 16 M \left(|Z|-2|\Gamma|\right)$,
and thus $ 4\sqrt{M\left(|Z|-2|\Gamma'|\right)} - \left(|\Gamma| - |\Gamma'|\right) \geq 4\sqrt{M\left(|Z|-2|\Gamma|\right)}$.

For $|\Gamma|\leq 2M-1$ there is therefore at least one path connecting an input vertex to a vertex in $Z$ which does not traverse any vertex in $\Gamma$. This constitutes a contradiction.
\end{proof}

Lemma~\ref{lem:stra_part2} provides us with the tools required to obtain our main result. 

\begin{theorem}[Lower bound \io complexity Strassen's algorithm]
\label{thm:genstrass}
Consider Strassen's algorithm being used to multiply two square matrices $A,B \in\ri^{n\times n}$. 
The I/O-complexity of Strassen's algorithm when run on a sequential machine with a cache of size $M$ is:
\begin{equation}\label{eq:stramain}
	IO_{H^{n\times n}}\left(M\right) \geq \frac{1}{7}\left(\frac{n}{\sqrt{M}}\right)^{\log_2 7}M.
\end{equation}
If run on $P$ processors each equipped with a local memory of size $M$ the \io complexity is:
\begin{equation}\label{eq:strapar}
	IO_{H^{n\times n}}(P,M) \geq \frac{1}{7} \left( \frac{n}{\sqrt{M}}\right)^{\log_2 7} \frac{M}{P} 
\end{equation}
\end{theorem}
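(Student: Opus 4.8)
The plan is to mimic, at the level of the red-blue pebble game, the segment argument used in Theorem~\ref{thm:strassnr}, but to replace the dominator bound for no-recomputation schedules (Corollary~\ref{cor:domnr}) with the recomputation-robust dominator bound just established in Lemma~\ref{lem:stra_part2}. Concretely, fix any computational schedule $\mathcal{C}$ for $H^{n\times n}$ using a cache of size $M$, assume without loss of generality that $n=2^a$ and $\sqrt M = 2^b$ (so that $H^{2\sqrt M\times 2\sqrt M}$ sub-CDAGs exist exactly), and focus on the set $\mathcal{Z}$ of output vertices of the $(n/2\sqrt M)^{\log_2 7}$ vertex-disjoint sub-CDAGs $H^{2\sqrt M\times 2\sqrt M}$ guaranteed by Lemma~\ref{lem:disictness_sub_cdag}; note $|\mathcal{Z}| = 4M\,(n/2\sqrt M)^{\log_2 7}$, counting each evaluation only once (i.e., the first computation of each such vertex). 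Partition $\mathcal{C}$ into consecutive segments $\mathcal{C}_1,\mathcal{C}_2,\ldots$ such that during segment $\mathcal{C}_j$ exactly $4M$ vertices of $\mathcal{Z}$ are computed \emph{for the first time}; call this set $Z_j$. The number of segments is $|\mathcal{Z}|/4M = (n/2\sqrt M)^{\log_2 7}$.

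Next I would argue that each segment forces at least $2M$ \io operations. Suppose $\mathcal{C}_j$ performs at most $2M-1$ reads. Let $\Gamma_j$ be the set of (at most $M$) vertices whose values sit in cache at the start of $\mathcal{C}_j$ together with the (at most $M-1$) vertices whose values are read into cache during $\mathcal{C}_j$; then $|\Gamma_j|\le 2M-1$. Since every vertex of $Z_j$ is computed for the first time within $\mathcal{C}_j$, the value at each such vertex must be derivable, via the operations performed during $\mathcal{C}_j$, from values present in $\Gamma_j$ — every directed path from an input vertex of $H^{n\times n}$ to a vertex of $Z_j$ must be ``cut'' by $\Gamma_j$, for otherwise some operand needed during $\mathcal{C}_j$ is neither in cache at the start, nor loaded, nor (since it lies on an input-to-$Z_j$ path and those vertices are not yet computed for the first time in earlier segments... ) — here one must be slightly careful: the standard pebble-game argument is that $\Gamma_j$ is a dominator set of $Z_j$. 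By Lemma~\ref{lem:stra_part2} any dominator set of a subset $Z\subseteq\mathcal{Z}$ has size at least $|Z|/2 = 2M$, contradicting $|\Gamma_j|\le 2M-1$. Hence each of the $(n/2\sqrt M)^{\log_2 7}$ segments incurs at least $M$ \io operations (one can afford to be slightly lossy: even using the bound that at least $M$ — rather than $2M$ — \io operations occur per segment suffices, which only needs $|\Gamma_j|\le 2M$ to yield a contradiction via a strict inequality, matching the ``$\ge 2M$'' of the lemma). Summing over disjoint segments gives $IO_{H^{n\times n}}(M)\ge (n/2\sqrt M)^{\log_2 7}M = \tfrac{1}{7}(n/\sqrt M)^{\log_2 7}M$, which is (\ref{eq:stramain}); the $1/7$ absorbs the $2^{\log_2 7}=7$ from the $2\sqrt M$. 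For $n<2\sqrt M$ the bound is dominated by the trivial $3M$ cost of reading inputs and writing outputs, so it holds there too. For the parallel bound (\ref{eq:strapar}), observe that some processor $P^\ast$ must be the first to compute at least $|\mathcal{Z}|/P$ of the vertices of $\mathcal{Z}$; re-running the segment argument on $P^\ast$'s local computation (its local memory of size $M$ plays the role of the cache, and messages received play the role of reads) yields at least $(n/2\sqrt M)^{\log_2 7}M/P$ words sent or received by $P^\ast$.

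The main obstacle is making the pebble-game step fully rigorous in the presence of recomputation: one must show that if $Z_j$ is computed during $\mathcal{C}_j$ with fewer than $2M$ reads, then the set $\Gamma_j$ described above genuinely is a dominator set of $Z_j$ in the original CDAG $H^{n\times n}$. The subtlety is that a vertex on an input-to-$Z_j$ path could have been computed in an \emph{earlier} segment and still reside in slow memory, then be read in during $\mathcal{C}_j$ — but such a read is counted in $\Gamma_j$, so that case is fine; and it cannot be recomputed from scratch during $\mathcal{C}_j$ without, recursively, all of \emph{its} input-to-it paths being cut by $\Gamma_j$. The clean way to phrase this is the classical one: consider the last time before/at $\mathcal{C}_j$ that information crossed into cache, and formalize that any vertex of $Z_j$ computed in $\mathcal{C}_j$ has all its ancestors either computed during $\mathcal{C}_j$ or represented in $\Gamma_j$; then following any input-to-$Z_j$ path forward, the first vertex not computed during $\mathcal{C}_j$ lies in $\Gamma_j$. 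This is exactly the argument already deployed in the proof of Theorem~\ref{thm:strassnr}, now invoked with Lemma~\ref{lem:stra_part2} in place of Corollary~\ref{cor:domnr}; so I would keep the exposition parallel to that proof and simply point to it, emphasizing that the only new ingredient is the recomputation-robust dominator lower bound.
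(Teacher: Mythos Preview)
Your proposal is correct and follows essentially the same route as the paper: partition any schedule into segments in which exactly $4M$ vertices of $\mathcal{Z}$ are computed for the first time, show that the set $\Gamma_j$ of values in cache at the start together with those read during the segment would have to dominate $Z_j$, invoke Lemma~\ref{lem:stra_part2} to force $|\Gamma_j|\ge 2M$, and sum $M$ \io operations over the $(n/2\sqrt{M})^{\log_2 7}$ segments (with the trivial $3M$ bound covering $n\le 2\sqrt{M}$, and pigeonhole on processors for~(\ref{eq:strapar})). The one slip is the phrase ``at most $2M-1$ reads'': it should be ``at most $M-1$ reads'' (as you in fact write immediately after and as your per-segment bound of $M$ reflects), so that $|\Gamma_j|\le M+(M-1)=2M-1$ and the contradiction with the $2M$ dominator bound of Lemma~\ref{lem:stra_part2} goes through---this is exactly the choice made in the paper.
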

\begin{proof}
We start by proving (\ref{eq:seqnewnr}). We assume without loss of generality that $n = 2^a$ and $\sqrt{M}=2^b$  for some $a,b\in \mathbb{N}$. At least $3M$ \io operations are necessary in order to read all the $2n^2$ input values form slow memory to the cache and to write the $n^2$ output values to the slow memory once they have been computed. The theorem is therefore verified if $n\leq 2\sqrt{M}$. 

For $n\geq 4\sqrt{M}$, let $\mathcal{Z}$ denote the set of output vertices of the $\left(n/2\sqrt{M}\right)^{\log_2 7}$ sub-CDAGs $H^{2\sqrt{M}\times 2\sqrt{M}}$ of $H^{n\times n}$.
	Let $\mathcal{C}$ be any computation schedule for the sequential execution of Strassen's algorithm using a cache memory of size $M$. 
	We partition $\mathcal{C}$ into segments such that during each segment $\mathcal{C}_i$ the values corresponding to exactly $4M$ distinct vertices in $\mathcal{Z}$ (denoted as $Z_i$) are computed for the \emph{first time}. As we have $|\mathcal{Z}|= 4M \left(\frac{n}{2\sqrt{M}}\right)^{\log 7}$, there will be $\left(n/\left(2\sqrt{M}\right)\right)^{\log 7}$ such segments.
%
	We shall now verify that at least $M$ \io operations are to be executed during every segment $\mathcal{C}_i$. The proof is by contradiction.   
	 Let $\Gamma_i$ denote the set of vertices of $H^{n \times n}$ which correspond to the at most $M$ values which are stored in the cache at the beginning of the $i$-th segment and to the at most $M-1$ values loaded into the cache form the slow memory during $\mathcal{C}_i$ by means of a \emph{read} \io operation. We then have $|\Gamma_i|\leq 2M -1$.
	 In order for the $4M$ values from $Z_i$ to be computed during the segment without any additional \io operation, there must be no path connecting any vertex in $Z_i$ to any input vertex of $H^{n\times n}$ which does not have at least one vertex in $\Gamma_i$ (i.e. $\Gamma_i$ has to be a \emph{dominator set} of $Z_i$). From Lemma~\ref{lem:stra_part2}, we have that any sub-set of $4M$ elements of $\mathcal{Z}$ has dominator size at least $2M$. This leads to a contradiction.
	
	At least $M$ \io operations are thus executed during each segment $\mathcal{C}_i$. Since, by construction, the $\left(\frac{n}{2\sqrt{M}}\right)^{\log 7}$ segments are not overlapping, we can therefore conclude that at least $M\left(\frac{n}{2\sqrt{M}}\right)^{\log 7}$ \io operations are necessary for the execution of any computation schedule $\mathcal{C}$. This concludes the proof for lower bound for the sequential case in (\ref{eq:stramain}). \\
	The proof for the bound for the parallel model in equation (\ref{eq:strapar}), follows a similar strategy: at least one of the $P$ processors being used, denoted as $P^*$, must compute at least $|\mathcal{Z}|/P = 4M \left(\frac{n}{2\sqrt{M}}\right)^{\log 7}/P$ values corresponding to vertices in $\mathcal{Z}$. The bound follows by applying the same argument discussed for the sequential case to the computation executed by $P^*$. 
	\end{proof}

In~\cite{ballard2012communicationalg}, Ballard et al. presented a version of Strassen's algorithm whose \io cost matches, up to a constant the lower bound obtained in Theorem~\ref{thm:genstrass}. This implies that our bound is indeed asymptotically tight and that  the version of Strassen's algorithm presented by Ballard et al. in~\cite{ballard2012communicationalg}, is indeed asymptotically optimal with respect to the \io cost. Furthermore, as in the optimal algorithm presented in~\cite{ballard2012communicationalg} no intermediate result is ever recomputed, we can conclude the use of recomputation can lead at most to a constant factor reduction of the \io complexity for the execution of Strassen's matrix multiplication algorithm. 
	
\section{Conclusions}\label{sec:conlcusion}

This work has contributed to the characterization of the \io
complexity of Strassen's algorithm, by establishing asymptotically
tight lower bounds that hold even when recomputation is allowed.
The technique we have developed crucially exploits the recursive
nature of the CDAG, which makes it promising for the analysis of other
recursive algorithms, beginning with fast rectangular matrix
multiplication algorithms~\cite{le2012faster}.

The relationship we have exploited between dominator size and
Grigoriev's flow points at connections between \io complexity,
(pebbling) space-time tradeoffs~\cite{savage97models}, and VLSI
area-time tradeoffs~\cite{thompson79vlsistoc}; these connections
deserve further attention.

Some CDAGs for which non trivial \io complexity lower bounds are known
only in the case of no recomputations are described
in~\cite{bilardi1999processor}. These CDAGs are of particular interest
as examples of speedups superlinear with the number of processors, in
the ``\emph{limiting technology}'', defined by fundamental limitations on
device size and message speed.  Whether such speedups hold even when
recomputation is allowed remains an open question, which the
techniques introduced here might help answer.

In general, while it is well known that recomputation may reduce the
\io complexity of some CDAGs, we are far from a characterization of
those CDAGs for which recomputation is effective. This broad goal
remains a challenge for any attempt toward a general theory of the
communication requirements of computations.
\subparagraph*{Acknowledgements.}

This work was supported, in part, by MIUR of Italy under project
AMANDA 2012C4E3KT 004 and by the University of Padova under projects
STPD08JA32, CPDA121378/12, and CPDA152255/15.



\bibliography{bibliography}
\clearpage

\appendix
\section{Properties of Strassen's algorithm}

\begin{algorithm}
\caption{Strassen's Matrix Multiplication}\label{alg:strass}
\begin{algorithmic}[1]
	\Statex \textbf{Input:} matrices $A,B$
	\Statex \textbf{Output:} matrix $C$
	\Procedure{StrassenMM}{A,B}
	\If{$n=1$}
	\State $C=A\cdot B$
	\Else
	\State Decompose $A$ and $B$ into four equally sized bloc matrices as follows:
	\Statex \begin{equation*}
\begin{array}{cc}
	A = \left[ \begin{array}{cc}
	A_{1,1} &A_{1,2} \\
	A_{2,1} &A_{2,2}
	\end{array}
	\right], & B = \left[ \begin{array}{cc}
	B_{1,1} &B_{1,2} \\
	B_{2,1} &B_{2,2}
	\end{array}
	\right]
\end{array}
\end{equation*}
\State $M_1 = \textsc{StrassenMM}\left(A_{1,1} + A_{2,2},B_{1,1} + B_{2,2}\right)$
\State $M_2 = \textsc{StrassenMM}\left(A_{2,1} + A_{2,2}, B_{1,1}\right)$
\State $M_3 = \textsc{StrassenMM}\left(A_{1,1},B_{1,2} - B_{2,2}\right)$
\State $M_4 = \textsc{StrassenMM}\left(A_{2,2},B_{2,1} - B_{1,1}\right)$
\State $M_5 = \textsc{StrassenMM}\left(A_{1,1} + A_{1,2}, B_{2,2}\right)$
\State $M_6 = \textsc{StrassenMM}\left(A_{2,1} - A_{1,1}, B_{1,1} + B_{1,2}\right)$
\State $M_7 = \textsc{StrassenMM}\left(A_{1,2} - A_{2,2}, B_{2,1} + B_{2,2}\right)$
\State $C_{1,1} = M_1 +M_4-M_5+M_7$ 
\State $C_{1,2} = M_3 + M_5$
\State $C_{2,1} = M_2 + M_4$
\State $C_{2,2} = M_1 -M_2 +M_3 +M_6$
\EndIf
\Return $C$
\EndProcedure
\end{algorithmic}
\end{algorithm}

The original version of  Strassen's fast matrix multiplication~\cite{strassen1969gaussian} is reported in Algortihm~\ref{alg:strass}. We refer the reader to~\cite{winograd1971multiplication} for
Winograd's variant, which reduces the number of additions. 

\label{app:proof_encoder}
\begin{proof}[Proof of Lemma~\ref{lem:disictness_sub_cdag}]
	Let us consider a single recursive step. While some of the sub-problems may use as input values form either $A$ and $B$ without any combination (i.e., $M_3$ uses the sub-matrix $A_{1,1}$ as input), none of the seven sub-problems share any of their input values. As this consideration holds for each level of the recursive construction, we have that the $7^i$ sub-problems with input of size $\frac{n}{2^i}\times\frac{n}{2^i}$ generated by of Strassen's algorithm do not share any input value.
\end{proof}

\begin{proof}[Proof of Lemma~\ref{lem:conneconder}]
We provide the proof for $Enc_A$, as $Enc_A$ and $Enc_B$ are isomorphic,  the result holds for $Enc_B$ as well. 
	Note that in $Enc_A$ there are some pairs of input-output vertices $u,v$ for which the input vertex $u$ is the only predecessor of the output vertex $v$. This implies that the two vertices are actually one unique vertex. With a little abuse of notation, we will still say that $u$  can be connected to $v$ via a single edge.
		
	We assign an index to each of the output vertices of $Enc_A$ according to how is indicated in Figure~\ref{fig:enca}.  
Note that the index assigned to each output corresponds to the index of the sub-problem generated by Strassen's algorithm for which the corresponding value is used as an input (see Figure~\ref{fig:bsestra} and Figure~\ref{fig:strarec} in Section~\ref{sec:preliminaries}). 

In order to verify that this lemma holds, we study all possible compositions of a subset $Y$ of the output vertices of $Enc_A$.  Each of these compositions is identified by a vector $\mathbf{y}$ with seven components, where $y_i = 1$ if the $i$-th output of $Enc_A$ is in $Y$ or zero otherwise, for $i\in\{1,2,\ldots,7\}$. In order to improve the presentation, we associate to each of the possible 128 compositions a \emph{code} given by $\sum_{i=1}^7 y_i2^{7-i}$. In Table~\ref{table:configurations}, we study each of the 128 possible compositions of $Y$, which are ordered by the value of $|Y|$ and by their code. The value in the last column $|X|$ denotes the maximum size of a sub-set $X$ of the input vertices of $Enc_A$ such that each vertex in $X$ can be connected to a distinct vertex in the subset $Y$ corresponding to $\mathbf{y}$. Each of these values can be obtained through a straightforward analysis of $Enc_A$.
The lemma then follows from observing that for any possible composition of the set $Y$ we have
\begin{equation*}
	|X|\geq \min \Big\{|Y|, 1 + \lceil\left(|Y|-1\right)/2\rceil\} .
\end{equation*}
\end{proof}

\begin{proof}[Proof of Lemma~\ref{lem:vdigraphslike}]
A proof for this result was presented in~\cite{scott2015matrix} (Lemma 1). For completeness, we present here a version of the proof using our notation.

Given an $\left(n_0, m_0\right)$-Strassen like algorithm used to multiply $A,B\in\ri^{n^2}$, let $\stral{n\times n}$ denote the corresponding CDAG. The recursive structure of the algorithm will generate a total of $m_0^{\log_{n_0}n-i} $ sub-problems with input size $n/n_0^i$ each of which corresponds to a sub-CDAGs $\stral{n/n_0^i\times n/n_0^i}$.

The lemma is trivially verified for $0\leq i \leq 1$, in the following we consider the case for $2\leq i\leq \log_{n_0} m_0$. As by hypothesis, non-trivial combinations of the input matrices are used as input for at most one sub-problem, the only input values which two sub-problems of the same size  can  share are sub-matrices of the ``\emph{global}'' input matrices $A,B$. For $i'=i-2$, et $\mathcal{P}_{1}$ be one of the sub-computations corresponding to one of the sub-CDAGs $\stral{n/n_0^{i'}\times n/n_0^{i'}}$ which multiplies matrices $A_1$ and $B_1$. 
Without loss of generality we assume that for both the basic encoder CDAGs for $A$ and $B$ at least one of the outputs is given by a non-trivial linear combination of elements of the input matrix. It is in fact well known that any algorithm which computes linear combinations of only one of the input matrices performs no better than the na\'ive-definition based matrix multiplication and it is therefore not a Strassen-like fast matrix multiplication algorithm.
This implies that at least one of the sub-problems $\mathcal{P}_{2}$ generated by $\mathcal{P}_{1}$ multiplies matrices $A_2$ and $B_2$ such that $A_2$ is a nontrivial linear combination of sub-matrices of $A_1$. Similarly, at least one of the  sub-computations $\mathcal{P}_{3}$ generated by $\mathcal{P}_{2}$ multiplies matrices $A_3$ by $B_3$ such that $B_3$ is  non-trivial linear combination of sub-matrices of $B$. $A_3$ (resp., $B_3$) shares no input with $A$ (resp., B). Thus at least one of the $m_0^2$ sub-computations for input size $n/n_0^i \times n/n_0^i$  generated by $\mathcal{P}_{1}$ is input-disjoint from it. 
As by hypothesis, non-trivial combinations of the input matrices are used as input for at most one sub-problem, all the sub-computations $\mathcal{P}_{3}$ corresponding to each $\mathcal{P}_{1}$ are input-disjoint and the corresponding sub-CDAGs $\stral{n/n_0^i\times n/n_0^i}$ are thus vertex disjoint.
\end{proof}
	\begin{longtable}[c]{|p{.04\textwidth}||p{.02\textwidth}||p{.03\textwidth} p{.03\textwidth} p{.03\textwidth} p{.03\textwidth} p{.03\textwidth} p{.03\textwidth} p{.03\textwidth}|p{.02\textwidth}|}
		\caption{Study of the possible compositions of sub-sets of output vertices of $Enc$ for Lemma~\ref{lem:conneconder}}\label{table:configurations}\\ 
		\hline
		code & $|\textbf{y}|$ & $y_1$ & $y_2$ & $y_3$ & $y_4$ & $y_5$ & $y_6$ & $y_7$ &  $|X|$ \\\hline
		\endfirsthead
		\hline
		\multicolumn{10}{|c|}{{Table~\ref{table:configurations} - continued from previous page}}\\\hline
		code & $|\textbf{y}|$ & $y_1$ & $y_2$ & $y_3$ & $y_4$ & $y_5$ & $y_6$ & $y_7$ & $|X|$ \\\hline
		\endhead
		
		\hline
		\multicolumn{10}{|c|}{{Continued on next page}}\\\hline
		\endfoot	
		
\hline 
\endlastfoot
			0   & 0 & 0 & 0 & 0 & 0 & 0 & 0 & 0 & 0 \\\hline
			1   & 1 & 0 & 0 & 0 & 0 & 0 & 0 & 1 & 1 \\
			2   & 1 & 0 & 0 & 0 & 0 & 0 & 1 & 0 & 1 \\
			4   & 1 & 0 & 0 & 0 & 0 & 1 & 0 & 0 & 1 \\
			8   & 1 & 0 & 0 & 0 & 1 & 0 & 0 & 0 & 1 \\
			16  & 1 & 0 & 0 & 1 & 0 & 0 & 0 & 0 & 1 \\
			32  & 1 & 0 & 1 & 0 & 0 & 0 & 0 & 0 & 1 \\
			64  & 1 & 1 & 0 & 0 & 0 & 0 & 0 & 0 & 1 \\\hline
			3   & 2 & 0 & 0 & 0 & 0 & 0 & 1 & 1 & 2 \\
			5   & 2 & 0 & 0 & 0 & 0 & 1 & 0 & 1 & 2 \\
			6   & 2 & 0 & 0 & 0 & 0 & 1 & 1 & 0 & 2 \\
			9   & 2 & 0 & 0 & 0 & 1 & 0 & 0 & 1 & 2 \\
			10  & 2 & 0 & 0 & 0 & 1 & 0 & 1 & 0 & 2 \\
			12  & 2 & 0 & 0 & 0 & 1 & 1 & 0 & 0 & 2 \\
			17  & 2 & 0 & 0 & 1 & 0 & 0 & 0 & 1 & 2 \\
			18 	& 2 & 0 & 0 & 1 & 0 & 0 & 1 & 0 & 2 \\
			20  & 2 & 0 & 0 & 1 & 0 & 1 & 0 & 0 & 2 \\
			24  & 2 & 0 & 0 & 1 & 1 & 0 & 0 & 0 & 2 \\
			33  & 2 & 0 & 1 & 0 & 0 & 0 & 0 & 1 & 2 \\
			34  & 2 & 0 & 1 & 0 & 0 & 0 & 1 & 0 & 2 \\
			36  & 2 & 0 & 1 & 0 & 0 & 1 & 0 & 0 & 2 \\
			40  & 2 & 0 & 1 & 0 & 1 & 0 & 0 & 0 & 2 \\
			48  & 2 & 0 & 1 & 1 & 0 & 0 & 0 & 0 & 2 \\
			65  & 2 & 1 & 0 & 0 & 0 & 0 & 0 & 1 & 2 \\
			66  & 2 & 1 & 0 & 0 & 0 & 0 & 1 & 0 & 2 \\
			68  & 2 & 1 & 0 & 0 & 0 & 1 & 0 & 0 & 2 \\
			72  & 2 & 1 & 0 & 0 & 1 & 0 & 0 & 0 & 2 \\
			80  & 2 & 1 & 0 & 1 & 0 & 0 & 0 & 0 & 2 \\
			96  & 2 & 1 & 1 & 0 & 0 & 0 & 0 & 0 & 2 \\\hline
			7   & 3 & 0 & 0 & 0 & 0 & 1 & 1 & 1 & 3 \\
			11  & 3 & 0 & 0 & 0 & 1 & 0 & 1 & 1 & 3 \\
			13  & 3 & 0 & 0 & 0 & 1 & 1 & 0 & 1 & 3 \\
			14  & 3 & 0 & 0 & 0 & 1 & 1 & 1 & 0 & 3 \\		
			19  & 3 & 0 & 0 & 1 & 0 & 0 & 1 & 1 & 3 \\
			21  & 3 & 0 & 0 & 1 & 0 & 1 & 0 & 1 & 3 \\
			22  & 3 & 0 & 0 & 1 & 0 & 1 & 1 & 0 & 3 \\
			25  & 3 & 0 & 0 & 1 & 1 & 0 & 0 & 1 & 3 \\
			26  & 3 & 0 & 0 & 1 & 1 & 0 & 1 & 0 & 3 \\
			28  & 3 & 0 & 0 & 1 & 1 & 1 & 0 & 0 & 3 \\
			35  & 3 & 0 & 1 & 0 & 0 & 0 & 1 & 1 & 3 \\
			37  & 3 & 0 & 1 & 0 & 0 & 1 & 0 & 1 & 3 \\
			38  & 3 & 0 & 1 & 0 & 0 & 1 & 1 & 0 & 3 \\
			41  & 3 & 0 & 1 & 0 & 1 & 0 & 0 & 1 & 3 \\
			42  & 3 & 0 & 1 & 0 & 1 & 0 & 1 & 0 & 3 \\
			44  & 3 & 0 & 1 & 0 & 1 & 1 & 0 & 0 & 3 \\
			49  & 3 & 0 & 1 & 1 & 0 & 0 & 0 & 1 & 3 \\
			50  & 3 & 0 & 1 & 1 & 0 & 0 & 1 & 0 & 3 \\
			52  & 3 & 0 & 1 & 1 & 0 & 1 & 0 & 0 & 3 \\
			56  & 3 & 0 & 1 & 1 & 1 & 0 & 0 & 0 & 3 \\
			67  & 3 & 1 & 0 & 0 & 0 & 0 & 1 & 1 & 3 \\
			69  & 3 & 1 & 0 & 0 & 0 & 1 & 0 & 1 & 3 \\
			70  & 3 & 1 & 0 & 0 & 0 & 1 & 1 & 0 & 3 \\
			73  & 3 & 1 & 0 & 0 & 1 & 0 & 0 & 1 & 3 \\
			74  & 3 & 1 & 0 & 0 & 1 & 0 & 1 & 0 & 3 \\		
			76  & 3 & 1 & 0 & 0 & 1 & 1 & 0 & 0 & 3 \\
			81  & 3 & 1 & 0 & 1 & 0 & 0 & 0 & 1 & 3 \\
			82  & 3 & 1 & 0 & 1 & 0 & 0 & 1 & 0 & 3 \\
			84 & 3 & 1 & 0 & 1 & 0 & 1 & 0 & 0 & 3 \\
			88 & 3 & 1 & 0 & 1 & 1 & 0 & 0 & 0 & 2 \\
			97 & 3 & 1 & 1 & 0 & 0 & 0 & 0 & 1 & 3 \\
			98 & 3 & 1 & 1 & 0 & 0 & 0 & 1 & 0 & 3 \\
			100 & 3 & 1 & 1 & 0 & 0 & 1 & 0 & 0 & 3 \\
			104 & 3 & 1 & 1 & 0 & 1 & 0 & 0 & 0 & 3 \\
			112 & 3 & 1 & 1 & 1 & 0 & 0 & 0 & 0 & 3 \\\hline
			15  & 4 & 0 & 0 & 0 & 1 & 1 & 1 & 1 & 4 \\
			23  & 4 & 0 & 0 & 1 & 0 & 1 & 1 & 1 & 4 \\
			27  & 4 & 0 & 0 & 1 & 1 & 0 & 1 & 1 & 4 \\
			29 	& 4 & 0 & 0 & 1 & 1 & 1 & 0 & 1 & 3 \\
			30  & 4 & 0 & 0 & 1 & 1 & 1 & 1 & 0 & 4 \\
			39  & 4 & 0 & 1 & 0 & 0 & 1 & 1 & 1 & 4 \\
			43  & 4 & 0 & 1 & 0 & 1 & 0 & 1 & 1 & 4 \\
			45  & 4 & 0 & 1 & 0 & 1 & 1 & 0 & 1 & 4 \\
			46  & 4 & 0 & 1 & 0 & 1 & 1 & 1 & 0 & 4 \\
			51  & 4 & 0 & 1 & 1 & 0 & 0 & 1 & 1 & 4 \\
			53  & 4 & 0 & 1 & 1 & 0 & 1 & 0 & 1 & 4 \\
			54  & 4 & 0 & 1 & 1 & 0 & 1 & 1 & 0 & 4 \\
			57  & 4 & 0 & 1 & 1 & 1 & 0 & 0 & 1 & 4 \\
			58  & 4 & 0 & 1 & 1 & 1 & 0 & 1 & 0 & 3 \\
			60	& 4 & 0 & 1 & 1 & 1 & 1 & 0 & 0 & 4 \\
			71  & 4 & 1 & 0 & 0 & 0 & 1 & 1 & 1 & 4 \\
			75  & 4 & 1 & 0 & 0 & 1 & 0 & 1 & 0 & 3 \\
			77  & 4 & 1 & 0 & 0 & 1 & 1 & 0 & 1 & 3 \\
			78  & 4 & 1 & 0 & 0 & 1 & 1 & 1 & 0 & 4 \\
			83  & 4 & 1 & 0 & 1 & 0 & 0 & 1 & 1 & 4 \\
			85  & 4 & 1 & 0 & 1 & 0 & 1 & 0 & 1 & 3 \\
			86  & 4 & 1 & 0 & 1 & 0 & 1 & 1 & 0 & 4 \\
			89  & 4 & 1 & 0 & 1 & 1 & 0 & 0 & 1 & 3 \\
			90  & 4 & 1 & 0 & 1 & 1 & 0 & 1 & 0 & 3 \\
			92  & 4 & 1 & 0 & 1 & 1 & 1 & 0 & 0 & 3 \\
			99  & 4 & 1 & 1 & 0 & 0 & 0 & 1 & 1 & 4 \\
			101 & 4 & 1 & 1 & 0 & 0 & 1 & 0 & 1 & 4 \\
			102 & 4 & 1 & 1 & 0 & 0 & 1 & 1 & 0 & 4 \\	
			105 & 4 & 1 & 1 & 0 & 1 & 0 & 0 & 1 & 4 \\
			106 & 4 & 1 & 1 & 0 & 1 & 0 & 1 & 0 & 3 \\
			108 & 4 & 1 & 1 & 0 & 1 & 1 & 0 & 0 & 4 \\
			113 & 4 & 1 & 1 & 1 & 0 & 0 & 0 & 1 & 4 \\
			114 & 4 & 1 & 1 & 1 & 0 & 0 & 1 & 0 & 3 \\
			116 & 4 & 1 & 1 & 1 & 0 & 1 & 0 & 0 & 4 \\
			120 & 4 & 1 & 1 & 1 & 1 & 0 & 0 & 0 & 3 \\\hline 
			31  & 5 & 0 & 0 & 1 & 1 & 1 & 1 & 1 & 4 \\
			47  & 5 & 0 & 1 & 0 & 1 & 1 & 1 & 1 & 4 \\
			55  & 5 & 0 & 1 & 1 & 0 & 1 & 1 & 1 & 4 \\
			59  & 5 & 0 & 1 & 1 & 1 & 0 & 1 & 1 & 4 \\
			61  & 5 & 0 & 1 & 1 & 1 & 1 & 0 & 1 & 4 \\
			62  & 5 & 0 & 1 & 1 & 1 & 1 & 1 & 0 & 4 \\
			79  & 5 & 1 & 0 & 0 & 1 & 1 & 1 & 1 & 4 \\
			87  & 5 & 1 & 0 & 1 & 0 & 1 & 1 & 1 & 4 \\
			91  & 5 & 1 & 0 & 1 & 1 & 0 & 1 & 1 & 4 \\
			93  & 5 & 1 & 0 & 1 & 1 & 1 & 0 & 1 & 3 \\
			94  & 5 & 1 & 0 & 1 & 1 & 1 & 1 & 0 & 4 \\
			103 & 5 & 1 & 1 & 0 & 0 & 1 & 1 & 1 & 4 \\
			107 & 5 & 1 & 1 & 0 & 1 & 0 & 1 & 1 & 4 \\
			109 & 5 & 1 & 1 & 0 & 1 & 1 & 0 & 1 & 4 \\
			110 & 5 & 1 & 1 & 0 & 1 & 1 & 1 & 0 & 4 \\
			115 & 5 & 1 & 1 & 1 & 0 & 0 & 1 & 1 & 4 \\
			117 & 5 & 1 & 1 & 1 & 0 & 1 & 0 & 1 & 4 \\
			118 & 5 & 1 & 1 & 1 & 0 & 1 & 1 & 0 & 4 \\
			121 & 5 & 1 & 1 & 1 & 1 & 0 & 0 & 1 & 4 \\
			122 & 5 & 1 & 1 & 1 & 1 & 0 & 1 & 0 & 3 \\
			124 & 5 & 1 & 1 & 1 & 1 & 1 & 0 & 0 & 4 \\\hline
			63  & 6 & 0 & 1 & 1 & 1 & 1 & 1 & 1 & 4 \\
			95  & 6 & 1 & 0 & 1 & 1 & 1 & 1 & 1 & 4 \\
			111 & 6 & 1 & 1 & 0 & 1 & 1 & 1 & 1 & 4 \\
			119 & 6 & 1 & 1 & 1 & 0 & 1 & 1 & 1 & 4 \\
			123 & 6 & 1 & 1 & 1 & 1 & 0 & 1 & 1 & 4 \\
			125 & 6 & 1 & 1 & 1 & 1 & 1 & 0 & 1 & 4 \\
			126 & 6 & 1 & 1 & 1 & 1 & 1 & 1 & 0 & 4 \\\hline
			127 & 7 & 1 & 1 & 1 & 1 & 1 & 1 & 1 & 4 	\\\hline
\end{longtable}

\end{document}